\let\llncssubparagraph\subparagraph
\let\subparagraph\paragraph
\let\subparagraph\llncssubparagraph
\newcommand{\Inv}{\mathit{Inv}}
\newcommand{\I}{\mathit{I}}
\newcommand{\limp}{\Rightarrow}
\newcommand{\sat}{\textsc{sat}}
\newcommand{\Queue}{\mathcal{Q}}
\newcommand{\Bad}{\mathit{Bad}}
\newcommand{\Init}{\mathit{Init}}
\newcommand{\Spacer}{\textsc{Spacer}\xspace}
\newcommand{\spacer}{\Spacer}
\newcommand{\Quic}{\textsc{Quic3}\xspace}
\newcommand{\quic}{\Quic}
\newcommand{\papercomment}[1]{}
\newcommand{\nat}{\mathbb{N}}
\newcommand{\cF}{\mathcal{F}}
\newcommand{\pMBP}{\textsc{pMbp}}
\newcommand{\Itp}{\textsc{Itp}}
\newcommand{\pItp}{\textsc{pItp}}
\newcommand{\Tr}{\mathit{Tr}}
\newcommand{\st}{\mathbin{\cdot}}
\newcommand{\cT}{\mathcal{T}}
\newcommand{\sel}{\mathsf{sel}\xspace}
\newcommand{\store}{\mathit{store}}
\newcommand{\Int}{\textsf{int}}
\newcommand{\Array}{\textsf{array}}
\newcommand{\var}[1]{v_{#1}}
\newcommand{\dom}{\mathit{dom}}
\newcommand{\range}{\mathit{range}}
\newcommand{\terms}{\mathit{Terms}}
\newcommand{\fvars}{\mathit{FVars}}
\newcommand{\const}{\mathit{Const}}
\newcommand{\abs}{\mathit{abs}}
\newcommand{\SK}{\mathit{SK}}
\newcommand{\sk}{\mathit{sk}}
\newcommand{\qi}{\mathit{qi}}
\newcommand{\BMC}{\mathit{BMC}}
\newcommand{\BMCk}[1]{\BMC_{#1}}
\newcommand{\qgen}{\textbf{QGen}\xspace}
\renewcommand{\paragraph}[1]{\vspace{0.2cm}\noindent {\normalfont\normalsize\itshape #1}}
\lstdefinestyle{CStyle}{
    language=C,
    showtabs=false,
    tabsize=2,
    basicstyle=\footnotesize
}
\Crefname{algorithm}{Alg.}{Alg.}
\author{Arie Gurfinkel\inst{1} \and Sharon Shoham\inst{2} \and Yakir Vizel\inst{3}}
\title{Quantifiers on Demand}
\institute{University of Waterloo \and Tel Aviv University \and The Technion}
\begin{document}
\pagestyle{empty}
\maketitle

\begin{abstract}
  Automated program verification is a difficult problem. It is
  undecidable even for transition systems over Linear Integer
  Arithmetic (LIA). Extending the transition system with theory of Arrays,
  further complicates the problem by requiring inference and reasoning
  with universally quantified formulas. In this paper, we present a
  new algorithm, \textsc{Quic3}, that extends IC3 to infer universally
  quantified invariants over the combined theory of LIA and Arrays. Unlike other approaches that use either IC3
  or an SMT solver as a black box, \textsc{Quic3} carefully manages
  quantified generalization (to construct quantified invariants) and
  quantifier instantiation (to detect convergence in the presence of
  quantifiers). While \textsc{Quic3} is not guaranteed to converge, it
  is guaranteed to make progress by exploring longer and longer
  executions. We have implemented \textsc{Quic3} within the Constrained
  Horn Clause solver engine of Z3 and experimented with it by applying
  \textsc{Quic3} to verifying a variety of public benchmarks of array
  manipulating C programs.
\end{abstract}

\setlength{\textfloatsep}{5pt}

\section{Introduction}
\label{sec:intro}

Algorithmic logic-based verification (ALV) is one of the most
prominent approaches for automated verification of software. ALV
approaches use SAT and SMT solvers to reason about bounded program
executions; and generalization techniques, such as interpolation, to
lift the reasoning to unbounded executions. In recent years,
IC3~\cite{DBLP:conf/vmcai/Bradley11} (originally proposed for hardware
model checking) and its extensions to Constrained Horn Clauses (CHC)
over SMT
theories~\cite{DBLP:conf/sat/HoderB12,DBLP:conf/fmcad/KomuravelliBGM15}
has emerged as the most dominant ALV technique. The efficiency of the
IC3 framework is demonstrated by success of such verification tools as
\textsc{SeaHorn}~\cite{DBLP:conf/cav/GurfinkelKKN15}.

The IC3 framework has been successfully extended to deal with
arithmetic~\cite{DBLP:conf/sat/HoderB12}, arithmetic and
arrays~\cite{DBLP:conf/fmcad/KomuravelliBGM15}, and universal
quantifiers~\cite{DBLP:conf/cav/KarbyshevBIRS15}. However, no
extension supports the \emph{combination} of all three. Extending IC3
to Linear Integer Arithmetic (LIA), Arrays, and
Quantifiers is the subject of this paper. Namely, we present a technique
to discover universally quantified solutions to CHC over the theories of
LIA and Arrays. These solutions correspond to universally quantified
inductive invariants of array manipulating programs.

For convenience of presentation, we present our approach over a
transition system modelled using the theories of Linear Integer
Arithmetic (LIA) and Arrays, and not the more general, but less
intuitive, setting of CHCs. Inductive invariants of such transition
systems are typically quantified, which introduces two major
challenges:
\begin{inparaenum}[(i)]
\item quantifiers tremendously increase the search space for a candidate
  inductive invariant, and
\item they require deciding satisfiability of quantified formulas --
  itself an undecidable problem.
\end{inparaenum}

Existing ALV techniques for inferring universally quantified arithmetic
invariants either restrict the shape of the quantifiers and
reduce to quantifier free
inference~\cite{DBLP:conf/sas/BjornerMR13,DBLP:conf/sas/MonniauxG16,fse16},
or guess quantified invariants from bounded
executions~\cite{DBLP:conf/cav/AlbertiBGRS12}.

In this paper, we introduce \textsc{Quic3} -- an extension of
\textsc{IC3}~\cite{DBLP:conf/vmcai/Bradley11,DBLP:conf/sat/HoderB12,DBLP:conf/cav/KomuravelliGC14}
to universally quantified invariants. Rather than fixing the shape of
the invariant, or discovering quantifiers as a post-processing phase,
\textsc{Quic3} computes the necessary quantifiers \emph{on demand} by
taking quantifiers into account during the search for invariants.  The
key ideas are to allow existential quantifiers in proof obligations
(or, counterexamples to induction) so that they are \emph{blocked} by
universally quantified lemmas, and to extend lemma generalization to
add quantifiers.

Generating quantifiers on demand gives more control over the
inductiveness checks. These checks (i.e., pushing in \textsc{IC3})
require deciding satisfiability of universally quantified formulas
over the combined theory of Arrays and LIA. This is undecidable, and
is typically addressed in SMT solvers by \emph{quantifier
  instantiation} in which a universally quantified formula
$\forall x \cdot \varphi(x)$ is approximated by a finite set of ground
instances of $\varphi$. SMT solvers, such as
Z3~\cite{DBLP:conf/tacas/MouraB08}, employ sophisticated
heuristics~(e.g.,~\cite{DBLP:conf/cav/GeM09}) to find a sufficient set
of instantiations. However, the heuristics are only complete in
limited situations (recall, the problem is undecidable in general),
and it is typical for the solver to return \emph{unknown}, or, even
worse, diverge in an infinite set of instantiations.

Instead of using an SMT solver as a black-box, \textsc{Quic3}
generates and maintains a set of instantiations on demand. This
ensures that \textsc{Quic3} always makes progress and is never stuck
in a single inductiveness check. The generation of instances is driven
by the \emph{blocking} phase of \textsc{IC3} and is supplemented by
traditional pattern-based triggers.
Generating both universally quantified lemmas and their instantiations
on demand, driven by the property, offers additional flexibility
compared to the eager quantifier instantiation approach
of~\cite{DBLP:conf/sas/BjornerMR13,DBLP:conf/sas/MonniauxG16,fse16}.

Combining the search for all of the ingredients (quantified and
quantifier-free formulas, and instantiations) in a single procedure
improves the control over the verification process. For example, even
though there is no guarantee of convergence (the problem is, after
all, undecidable), we guarantee that \textsc{Quic3} makes progress,
exploring more of the program, and discovering a counter-example (even
the shortest one) if it
exists.

While our intended target is program verification, we have implemented
\textsc{Quic3} in a more general setting of Constrained Horn Clauses
(CHC). We build on the Generalized PDR
engines~\cite{DBLP:conf/sat/HoderB12,DBLP:conf/cav/KomuravelliGC14} in
Z3. The input is a set of CHC in SMT-LIB format, and the output is a
universally quantified inductive invariant, or a counter-example. To
evaluate \textsc{Quic3}, we have used array manipulating C programs
from SV-COMP. 
We show that our implementation is competitive and can automatically
discover non-trivial quantified invariants.

In summary, the paper makes the following contributions: (a) extends IC3 framework to support quantifiers;
(b) develops quantifier generalization techniques; (c) develops
techniques for discovering quantifier instantiations during
verification; and (d) reports on our
implementation for software verification.

\vspace{-0.1in}
\section{Preliminaries}
\label{sec:preliminaries}
\vspace{-0.2in}

\paragraph{Logic.}
 We consider First Order Logic modulo the combined theory of Linear Integer Arithmetic (LIA) and Arrays.
We denote the theory by $\cT$ and the logic by $FOL(\cT)$.
We assume that the reader is familiar with the basic notions of $FOL(\cT)$
and provide only a brief description to set the notation.
Formulas in $FOL(\cT)$ are defined over a signature
$\Sigma$ which includes sorts $\Int$ and $\Array$, where sort $\Int$ is also used as the sort of the array indices and data.
We assume that the signature $\Sigma$ includes equality (=),
interpreted functions, predicates, and constants of arithmetic (i.e.,
the functions $+$, $-$, $*$, the predicates $<$, $\leq$, and the
constants $1$, $2$, etc.) and of arrays (i.e., the functions $\sel$
and $\store$).

In addition, $\Sigma$ may be extended with uninterpreted constants. In
particular, we assume that $\Sigma$ includes special \emph{Skolem}
uninterpreted constants $\SK = \{sk_i\}$ of sort {\Int} for $i$ in
natural numbers.

We denote by $\Sigma_{\cT}$ the interpreted part of $\Sigma$, and by $X \subseteq \Sigma$ the set of uninterpreted constants (e.g., $a$ or $sk_i$, but not $1$).
In the sequel we write $\varphi(X)$, and say that $\varphi$ is defined over $X$, to denote that $\varphi$ is defined over signature $\Sigma = \Sigma_{\cT} \cup X$.
We write $\const(\varphi) \subseteq X$ for the set of all uninterpreted constants that appear in $\varphi$.
In the rest of the paper, whenever we refer to constants, we only refer to the uninterpreted ones.

We write $T$ for the set of terms of $FOL(\cT)$, and $V$ for the set
of (sorted) variables. We assume that {\Int} variables in $V$ are of the form
$\var{i}$, where $i$ is a natural number. Thus, we can refer to all
such variables by their numeric name.  For a formula $\varphi$, we write
$\terms(\varphi) \subseteq T$ and $\fvars(\varphi) \subseteq V$ for
the terms and free variables of $\varphi$, respectively.

A substitution $\sigma: V \to T$ is a partial mapping from $V$ to
terms in $T$ that pertains to the sort constraints. We write $\dom(\sigma)$ to denote the domain of
$\sigma$, and $\range(\sigma)$ to denote its range.
For a formula $\varphi$, we write $\varphi \sigma$ for the
result of applying substitution $\sigma$ to $\varphi$. Abusing
notation, we write $\emptyset$ for an empty substitution, i.e.,
a substitution $\sigma$ such that $\dom(\sigma) = \emptyset$. Given two
substitutions $\sigma_1$ and $\sigma_2$, we write $(\sigma_1 \mid
\sigma_2)$ for a composition of substitutions defined such that:
$(\sigma_1 \mid \sigma_2)(x) = \sigma_1(x)$ if $x \in\dom(\sigma_1)$,
and $\sigma_2(x)$, otherwise.
We define a special \emph{Skolem substitution} $\sk:V \to T$ such that
$\sk(\var{i}) = \sk_i$ for $\sk_i \in \SK$. Given a formula $L$, we
write $L_{\sk}$ for $L\sk$, and given a substitution~$\sigma$. 

We write $\abs (U, \varphi) = (\psi, \sigma)$ for an abstraction
function that given a set of uninterpreted constants $U$ and a formula
$\varphi$ returns an abstraction $\psi$ of $\varphi$ in which the
constants are replaced by free variables, as well as a substitution
$\sigma$ that records the mapping of variables back to the constants
that they abstract. Formally, we require that
$\abs (U, \varphi) = (\psi, \sigma)$ satisfies the following:
$\psi \sigma = \varphi$,
$\dom(\sigma) = \fvars(\psi) \setminus \fvars(\varphi)$, and
$U \cap \terms(\psi) = \emptyset$.
The requirements ensure that $\abs$ abstracts all uninterpreted constants in
$U$, and $\sigma$ maps the newly introduced variables back to the
constants. 
Furthermore, we require that for every skolem constant $\sk_i$ in $U$, $\abs(U, \varphi)$ abstracts $\sk_i$ in $\varphi$ to $\var{i}$ in $\psi$, and accordingly, $\sigma(\var{i})=\sk_i$.
This ensures that applying skolemization, followed by abstraction of $\SK$, reintroduces the same variables and does not result in variable renaming.
That is, $\abs(\SK, \varphi_{\sk}) = (\varphi, \_)$.

We write $\forall \varphi$ for a formula obtained from $\varphi$ by
universally quantifying all free variables of $\varphi$, and
$\exists \varphi$ for a formula obtained by existential
quantification, respectively. For convenience, given a set of
constants $U$ and a ground formula $\varphi$ (i.e., a formula where
all terms are ground), we write $\exists U \st \varphi$ for
$\exists \psi$, where $(\psi, \sigma) = \abs(U, \varphi)$.
We write $\varphi \limp \psi$ do denote the validity of $\varphi \rightarrow \psi$.

\paragraph{Model Based Projection.} Given a ground formula $\varphi$,
a model $M$ of $\varphi$, and a set of uninterpreted constants
$U \subseteq \const(\varphi)$,
 (partial, or incomplete) Model Based Projection,
MBP, is a function $\pMBP (U, \varphi, M) = (\psi, W)$ such that
\begin{inparaenum}
\item $\psi$ is a ground monomial (i.e., conjunction of ground
  literals),
\item $W \subseteq U$ and $\const(\psi) \subseteq \const(\varphi) \setminus (U \setminus W)$,
\item $\psi \limp (\exists U \setminus W \st \varphi)$,
\item $M \models \psi$,
\item $\pMBP$ is finite ranging in its third argument: for a fixed $U$
  and $\varphi$, the set
  $\{\pMBP (U, \varphi, M) \mid M \models \varphi \}$ is finite.
\end{inparaenum}
Intuitively, the monomial $\psi$ underapproximates (implies) the
result of eliminating the existential quantifiers pertaining to
$U \setminus W$ from $\varphi$ (where quantifier elimination itself
may not even be defined). It, therefore, represents one of the ways of
satisfying the result of quantifier elimination. The
underapproximation $\psi$ is chosen such that it is consistent with
the provided model $M$. In this paper,  MBP is used as a
way to underapproximate the pre-image of a set of states represented
implicitly by some formula.

An MBP is called \emph{complete} if $W$ is always empty. A complete
MBP for Linear Arithmetic has been presented
in~\cite{DBLP:conf/cav/KomuravelliGC14} and a partial MBP for the
theory of arrays has been presented
in~\cite{DBLP:conf/fmcad/KomuravelliBGM15}.  Importantly, in the
partial MBP of~\cite{DBLP:conf/fmcad/KomuravelliBGM15}, the remaining
set of constants, $W$, never contains any constant of sort \Array.  We
refer the readers
to~\cite{DBLP:conf/cav/KomuravelliGC14,DBLP:conf/fmcad/KomuravelliBGM15}
and to~\cite{DBLP:conf/lpar/BjornerJ15} for details. A complete MBP
under-approximates quantifier elimination relative to a given
model. Such an MBP can only exist if the underlying theory admits
quantifier elimination. Since the theory of arrays does not admit
quantifier elimination it only admits a partial MBP.

In the paper, we further require an MBP to eliminate all the constants
of sort {\Array} from $U$, such as the MBP
of~\cite{DBLP:conf/fmcad/KomuravelliBGM15}.

\paragraph{Interpolation.} Given a ground formula $A$, and a ground
monomial $B$ such that $A \limp \neg B$, (partial) interpolation,
ITP, is a function $\pItp(A, B) = (\varphi, U)$, s.t.
\begin{inparaenum}
\item $\varphi$ is a ground clause (i.e., a disjunction of ground literals),
\item $U \subseteq \const(B) \setminus \const(A)$
  and $\const(\varphi) \subseteq (\const(A) \cap \const(B)) \cup U$,
\item $A \limp \forall U \st \varphi$, and
\item $\varphi \limp \neg B$.
\end{inparaenum}
The set of constants $U$ denotes the constants of $\varphi$ that exceed the set of shared constants of $A$ and $B$.
An interpolation procedure is complete if for any pair $A$, $B$, the
returned set $U$ is always empty. The formula $\varphi$ produced by a
complete interpolation procedure is called an \emph{interpolant} of
$A$ and $B$. Note that our definitions admit a trivial partial
interpolation procedure defined as
$\pItp_{\mathit{triv}} (A, B) = (\neg B, \const(B) \setminus
  \const (A))$.

\paragraph{Safety problem.}  We represent transition systems via
formulas in $FOL(\cT)$. The states of the system correspond to
structures over a signature $\Sigma = \Sigma_{\cT} \cup X$, where $X$
denotes the set of (uninterpreted) constants.  The constants in $X$
are used to represent program variables.  A \emph{transition system}
is a pair $\langle \Init(X), \Tr(X,X') \rangle$, where $\Init$ and
$\Tr$ are quantifier-free ground formulas in $FOL(\cT)$.  $\Init$
represents the initial states of the system and $\Tr$ represents the
transition relation. We write $\Tr(X,X')$ to denote that $\Tr$ is
defined over the signature $\Sigma_{\cT} \cup X \cup X'$, where $X$ is
used to represent the pre-state of a transition, and
$X' = \{a' \mid a \in X\}$ is used to represent the post-state.  A
\emph{safety problem} is a triple
$\langle \Init(X), \Tr(X,X'), \Bad(X) \rangle$, where
$\langle \Init, \Tr \rangle$ is a transition system and $\Bad$ is a
quantifier-free ground formula in $FOL(\cT)$ representing a set of
bad states.

The safety problem $\langle \Init(X), \Tr(X,X'), \Bad(X)\rangle$ has a \emph{counterexample of length $k$} if the following formula is satisfiable:
\[
\BMCk{k}(\Init,\Tr,\Bad) = \Init(X_0) \wedge \bigwedge_{i=0}^{k-1} \Tr(X_i,X_{i+1}) \wedge \Bad(X_k),
\]
where $X_i = \{a_i \mid a \in X\}$ is a copy of the constants used to represent the state of the system after the execution of $i$ steps.
The transition system is \emph{safe} if the safety problem has no counterexample, of any length.

\paragraph{Interpolation sequence and inductive invariants.} An \emph{interpolation sequence of length $k$} for a safety problem $\langle \Init(X), \Tr(X,X'), \Bad(X) \rangle$ is a sequence of formulas $\I_1(X),\ldots,\I_k(X)$ such that
\begin{inparaenum}[(i)]
\item $\Init(X) \limp \I_1(X)$,
\item $\I_j(X) \wedge \Tr(X,X') \limp \I_{j+1}(X')$ for every $1 \leq j \leq k-1$, and
\item $\I_k(X)\limp \neg \Bad(X)$.
\end{inparaenum}
If an interpolation sequence of length $k$ exists, then the transition system has no counterexample of length $k$.
An \emph{inductive invariant} is a formula $\Inv(X)$ such that
\begin{inparaenum}[(i)]
\item $\Init(X) \limp \Inv(X)$,
\item $\Inv(X) \wedge \Tr(X,X') \limp \Inv(X')$, and
\item $\Inv(X)\limp \neg \Bad(X)$.
\end{inparaenum}
If such an inductive invariant exists, then the transition system is safe.

\section{Quantified IC3}
\label{sec:linear-quic3}

\begin{algorithm}[t]\small
  \KwIn{A safety problem
    $\langle \Init(X), \Tr(X,X'), \Bad(X) \rangle$.}

  \KwSty{Assumptions}: $\Init$, $\Tr$ and $\Bad$ are quantifier free.

  \KwData{A POB queue $\Queue$, where a POB $c \in \Queue$ is a triple
    $\langle m, \sigma, i \rangle$, $m$ is a conjunction of literals
    over
    $X$ 
    and free variables, $\sigma$ is a substitution s.t. $m\sigma$ is
    ground, and $i \in \nat$. A level $N$. A quantified trace
    $\cT = Q_0, Q_1, \ldots$, where for every pair
    $(\ell, \sigma) \in Q_i$, $\ell$ is a quantifier-free formula over
    $X$ and free variables and $\sigma$ a substitution
    s.t. $\ell\sigma$ is ground.}

  \KwSty{Notation}: $\cF(A) = (A(X) \land \Tr(X,X')) \lor
  \Init(X')$; $\qi(Q) = \{\ell\sigma \mid (\ell, \sigma) \in
  Q\}$; $\forall Q = \{\forall \ell \mid (\ell, \sigma) \in Q \}$.
  \\[0.1in]

  \KwOut{\emph{Safe} or \emph{Cex} }

  \KwSty{Initially:}
  $\Queue = \emptyset$, $N=0$, 
  $Q_0=\{(\Init,\emptyset)\}$,
  $\forall i > 0 \st Q_i = \emptyset$.\\
  \Repeat{$\infty$} {
    \begin{description}
      \setlength\itemsep{0.05in}
    \item[Safe] If there is an $i < N$ s.t.
    $\forall Q_{i} \subseteq \forall Q_{i+1}$
      \Return{\textit{Safe}}.
    \item[Cex] If there is an $m, \sigma$ s.t. $\langle m, \sigma, 0 \rangle \in
      \Queue$  \Return{\textit{Cex}}.
    \item[Unfold] If 
     $\qi(Q_N) \limp \neg \Bad$, then set
      $N \gets N + 1$.
    \item[Candidate]
      If for some $m$, 
      $m \limp \qi(Q_N) \land \Bad$,
      then add $\langle m, \emptyset, N\rangle$ to $\Queue$.
    \item[Predecessor] If $\langle m, \xi, i+1\rangle \in \Queue$ and
      there is a model $M$ s.t.
      $M \models \qi(Q_i) \land \Tr \land (m'_{\sk})$, add
      $\langle \psi, \sigma, i \rangle$ to $\Queue$, where
      $(\psi, \sigma) = \abs(U, \varphi)$ and
      $(\varphi, U) = \pMBP (X'\cup \SK, \Tr \land m'_{\sk}, M)$.
    \item[NewLemma] For $0 \leq i < N$, given a POB
      $\langle m, \sigma, i+1\rangle\in\Queue$ s.t.
      $\cF(\qi(Q_i)) \land m'_{\sk}$ is unsatisfiable, and
      $L' = \Itp (\cF(\qi(Q_i)), m'_{\sk})$,
      add $(\ell, \sigma)$ to $Q_j$ for
      $j \leq i + 1$, where $(\ell, \_ ) = \abs(\SK, L)$.
  \item[Push]
  For $0 \leq i < N$ and
    $((\varphi \lor\psi), \sigma) \ \in Q_i$, if
    $(\varphi, \sigma)\not\in Q_{i+1}$, $\Init \limp \forall\varphi$ and
    $(\forall\varphi) \land \forall Q_i \land \qi(Q_i) \land \Tr \;\limp\; \forall\varphi'$, then
    add $(\varphi, \sigma)$ to $Q_{j}$, for all $j \leq i + 1$.

    \end{description}
  }
  \caption{The rules of \textsc{Quic3} procedure.}
  \label{alg:lquic} 
  \vspace{-0.1in}
\end{algorithm}

In this section, we present \textsc{Quic3} -- a procedure for
determining a safety of a transition system by inferring quantified
inductive invariants.  Given a safety problem, \textsc{Quic3} attempts
to discover an inductive invariant $\Inv(X)$ as a
universally-quantified formula of $FOL(\cT)$ (where quantification is
restricted to variables of sort $\Int$) or produce a counterexample.

We first present \textsc{Quic3} as a set of rules, following the
presentation style
of~\cite{DBLP:conf/sat/HoderB12,DBLP:conf/cav/KomuravelliGC14,DBLP:conf/vmcai/BjornerG15,DBLP:conf/fmcad/KomuravelliBGM15,DBLP:conf/fmcad/GurfinkelI15}. We
focus on the data structures, the key differences between
\textsc{Quic3} and \textsc{IC3}, and soundness of the rules. An
imperative procedure based on these rules is presented in
\cref{sec:progress-cex}. We assume that the reader is familiar with
the basics of \textsc{IC3}.
Throughout the section, we fix a safety problem
$P = \langle \Init(X), \Tr(X,X'), \Bad(X) \rangle$, and assume that
$\Init$, $\Tr$ and $\Bad$ are quantifier free ground formulas. For
convenience of presentation, we use the notation $\cF(A)$ to denote
the formula $(A(X) \land \Tr(X,X')) \lor \Init(X')$ that corresponds
to the forward image of $A$ over the $\Tr$ extended by the initial
states.

The rules of \textsc{Quic3} are shown in \cref{alg:lquic}. Similar to
\textsc{IC3}, \textsc{Quic3} maintains a queue $\Queue$ of proof
obligations (POBs), and a monotone inductive trace $\cT$ of frames
containing lemmas at different levels. However, both the proof
obligations and the lemmas maintained by \textsc{Quic3} are
quantified.

\paragraph{Quantified Proof Obligations.} Each POB in $\Queue$ is a
triple $\langle m, \sigma, i \rangle$, where $m$ is a
monomial 
over $X$ such that $\fvars(m)$ are of sort $\Int$, $\sigma$ is a
substitution such that
$\fvars(m) \subseteq \dom(\sigma)$ and
$\range(\sigma) \subseteq X' \cup \SK$, and $i$ is a natural number
representing the frame index at which the POB should be either blocked
or extended.  The POB $\langle m, \sigma, i \rangle$ expresses an
obligation to show that no state satisfying $\exists m$ is reachable
in $i$ steps of $\Tr$. The substitution $\sigma$ records the specific
instance of the free variables in frame $i+1$ that were abstracted
during construction of $m$.  Whenever the POB is blocked, a
universally quantified lemma $\forall\ell$ is generated in frame $i$
(as a generalization of $\forall \neg m$), and, $\sigma$ is used to
discover the specific instance of $\forall\ell$ that is necessary to
prevent generating the same POB again.

\paragraph{Quantified Inductive Trace.} A quantified monotone inductive
trace $\cT$ is a sequence of sets $Q_i$.
Each $Q_i$ is a set of pairs,
where for each pair $(\ell, \sigma)$ in $Q_i$, $\ell$ is a formula over $X$, possibly with
free variables, such that 
all free variables $\fvars(\ell)$ are of sort $\Int$, and $\sigma$ is a
substitution such that 
$\fvars(\ell) \subseteq \dom(\sigma)$ and
$\range(\sigma) \subseteq X' \cup \SK$. Intuitively, a pair
$(\ell, \sigma)$ corresponds to a universally quantified lemma
$\forall\ell$ and its ground instance $\ell\sigma$.
If $\ell$ has no free variables, it represents a ground lemma (as in
the original IC3).
 We write
$\forall Q_i = \{\forall L \mid (L, \sigma) \in Q_i\}$ for the set of
all ground and quantified lemmas in $Q_i$, and $\qi(Q_i) = \{\ell\sigma \mid
(\ell, \sigma) \in Q_i\}$ for the set of all instances in $Q_i$.

\textsc{Quic3} maintains that the trace $\cT$ is inductive and
monotone. That is, it satisfies the following conditions, where $N$ is
the size of $\cT$:
\begin{align*}
  \Init &\limp \forall Q_0 &
  \forall 0 \leq i < N \cdot \forall Q_i \land \Tr &\limp
  \forall Q_{i+1} &
  \forall Q_{i+1} \subseteq&\forall Q_i
\end{align*}
The first two conditions ensure inductiveness and the last ensures
syntactic monotonicity. Both are similar to the corresponding
conditions in \textsc{IC3}.

\paragraph{The rules.} The rules \textbf{Safe}, \textbf{Cex}, \textbf{Unfold},
\textbf{Candidate} 
are essentially the same as
their \textsc{IC3} counterparts. The only exception is that, whenever
the lemmas of frame $i$ are required, the instances $\qi(Q_i)$ of the
quantified lemmas in $Q_i$ are used (instead of $\forall Q_i$).
This ensures that the corresponding satisfiability
checks are decidable and do not diverge.

\paragraph{\textbf{Predecessor} rule.} \textbf{Predecessor} extends a POB
$\langle m, \xi, i+1\rangle \in \Queue$ from frame $i+1$ with a
predecessor POB $\langle \psi, \sigma, i \rangle$ at frame $i$. The
precondition to the rule is satisfiability of
$\qi(Q_i) \land \Tr \land (m'_{\sk})$. Note that all free
variables in the current POB $m$ are skolemized via the substitution
$\sk$ (recall that all the free variables are of sort $\Int$) and all constants are primed.

\textbf{Predecessor} rule extends the corresponding rule of
\textsc{IC3} in two ways.
First, POBs are generated using partial MBP. The
$\pMBP (X'\cup \SK, \Tr \land m'_{\sk}, M)$ is used to construct a
ground monomial $\varphi$
over $X \cup X' \cup \SK$, describing a
predecessor of $m'_{\sk}$. Whenever $\varphi$ contains constants from
$X' \cup SK$, these are abstracted by fresh free variables to
construct a POB $\psi$ over $X$. Thus, the newly constructed POB is not ground
and its free variables are implicitly existentially quantified.
(Since $\pMBP$ is guaranteed to eliminate all constants of sort $\Array$,
the free variables are all of sort $\Int$).
Second, the \textbf{Predecessor} maintains with the POB $\psi$ the
substitution $\sigma$ that corresponds to the inverse of the
abstraction used to construct $\psi$ from $\varphi$, i.e.,
$\psi\sigma = \varphi$.  
It is used to introduce a ground instance that blocks $\psi$
as a predecessor of $\langle m, \xi, i+1\rangle$ when the POB is blocked (see
\textbf{NewLemma}).

The soundness of \textbf{Predecessor} (in the sense that it does not introduce spurious counterexamples) rests on the fact that every
state in the generated POB has a $\Tr$ successor in the original
POB. This is formalized as follows: 
\begin{lemma}\label{lem:pred}
  Let
    $\langle m, \xi, i+1\rangle \in \Queue$ and let
    $(\psi, \sigma, i)$ be the POB computed by
    \textbf{Predecessor}.
    Then, 
    $(\exists \psi) \limp \exists X' \cdot (Tr \land \exists m')$.
\end{lemma}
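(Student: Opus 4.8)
The plan is to unwind the definition of the \textbf{Predecessor} rule together with the specification of partial MBP, and then chain three implications, each of which is a one-line consequence of a property already stated in the preliminaries. Throughout, fix the POB $\langle m, \xi, i+1\rangle \in \Queue$, the model $M$, and the objects produced by the rule: $(\varphi, U) = \pMBP(X'\cup\SK,\ \Tr \land m'_{\sk},\ M)$ and $(\psi,\sigma) = \abs(U,\varphi)$. Recall from the MBP specification that $\varphi$ is a \emph{ground} monomial, $U \subseteq X'\cup\SK$, and, crucially, $\varphi \limp \bigl(\exists\,(X'\cup\SK)\setminus U \st (\Tr \land m'_{\sk})\bigr)$.

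First I would relate $\exists\psi$ to $\varphi$. Since $\varphi$ is ground, the $\abs$ specification ($\psi\sigma = \varphi$, $\dom(\sigma) = \fvars(\psi)\setminus\fvars(\varphi) = \fvars(\psi)$, and $\sigma$ maps the fresh variables back onto the abstracted constants of $U$) says precisely that $\psi$ is $\varphi$ with its $U$-constants replaced by distinct fresh free variables. Hence, directly by the paper's notation $\exists U \st(\cdot)$ for ground formulas, $\exists\psi \equiv \exists U \st \varphi$.

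Next I would existentially quantify $U$ on both sides of the MBP implication. Monotonicity of existential quantification gives $\exists U \st \varphi \limp \exists U \st \bigl(\exists\,(X'\cup\SK)\setminus U \st (\Tr \land m'_{\sk})\bigr)$. Since $U$ and $(X'\cup\SK)\setminus U$ partition $X'\cup\SK$ and $\Tr \land m'_{\sk}$ is ground, the staged quantification on the right collapses to $\exists\,(X'\cup\SK)\st(\Tr\land m'_{\sk})$; combined with the previous paragraph, $\exists\psi \limp \exists\,(X'\cup\SK)\st(\Tr\land m'_{\sk})$. Finally I would split off the Skolem constants: they occur only in $m'_{\sk}$ and not in $\Tr$ (which, being part of the input safety problem, is a ground formula over the program constants $X\cup X'$ and mentions no $\sk_i$), so the $\SK$-quantifiers may be pushed inward, yielding $\exists X' \st \bigl(\Tr \land (\exists\SK\st m'_{\sk})\bigr)$. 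By the preliminaries' identity $\abs(\SK, \chi_{\sk}) = (\chi,\_)$ instantiated at $\chi := m'$, we get $\exists\SK\st m'_{\sk} = \exists m'$, so the right-hand side is exactly $\exists X' \st (\Tr \land \exists m')$, which is the claim. (Note $M$ plays no role beyond guaranteeing the rule is applicable; only MBP property (iii) and the $\abs$ conventions are used.)

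The only calculations are quantifier-prefix manipulations — the main thing to get right is the variable/Skolem naming, so that the staged quantifiers genuinely merge and so that $\exists\SK\st m'_{\sk}$ re-binds exactly the variables of $m'$ with no spurious renaming. This is exactly what the special-casing of Skolem constants in $\abs$ (namely $\sk_i \mapsto \var{i}$, together with $\abs(\SK,\varphi_{\sk}) = (\varphi,\_)$) is designed to ensure, so I expect this to be bookkeeping rather than a genuine obstacle.
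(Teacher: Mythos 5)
Your proposal is correct and follows essentially the same route as the paper's proof: invoke MBP property (iii) to get $\varphi \limp \exists\,(X'\cup\SK)\setminus U \st (\Tr\land m'_{\sk})$, identify $\exists\psi$ with $\exists U \st \varphi$ via the $\abs$ conventions, existentially quantify $U$ on both sides and merge the quantifier prefixes, and finally push the $\SK$-quantifiers onto $m'_{\sk}$ using the fact that $\SK$ is disjoint from the constants of $\Tr$. Your write-up is merely somewhat more explicit about the quantifier bookkeeping (in particular, citing $\abs(\SK,\chi_{\sk})=(\chi,\_)$ to justify $\exists\SK\st m'_{\sk} = \exists m'$), which the paper leaves implicit.
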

\begin{proof}
  From the definition of \textbf{Predecessor},
  $(\psi, \sigma) = \abs(U, \varphi)$, where
  $(\varphi, U) = \pMBP (X'\cup \SK, \Tr \land m'_{\sk}, M)$. The set
  $U \subseteq X' \cup \SK$ are the constants that were not eliminated
  by MBP.  Then, by properties of $\pMBP$,
  $\psi \sigma \limp \exists (X', \SK) \setminus U \cdot Tr \land
  m'_{\sk}$.  Note that $(\exists U \cdot \varphi) = \exists \psi$. By
  abstracting $U$ in $\varphi$ and existentially quantifying over the
  resulting variables in both sides of the implication, we get that
  $\exists \psi \limp \exists X', \SK \cdot Tr \land m'_{\sk}$.  Since
  $\SK$ does not appear in $Tr$, the existential quantification
  distributes over $\Tr$:
  $\exists X', \SK \cdot Tr \land m'_{\sk} \equiv \exists X' \cdot (Tr
  \land \exists m')$.  \qed
\end{proof}

By induction and \cref{lem:pred}, we get that if
$ \langle \psi, \sigma, i\rangle$ is a POB in $\Queue$, then every
state satisfying $\exists\psi$ can reach a state in $\Bad$.

\paragraph{\textbf{NewLemma} rule.} \textbf{NewLemma} creates a potentially quantified lemma $\ell$ and a
corresponding instance $\ell\sigma$ to block a quantified POB
$\langle m, \sigma, i+1\rangle$ at level $i+1$. Note that if $\ell$ is quantified, then while the
instance $\ell\sigma$ is guaranteed to be new at level $i+1$, the lemma
$\ell$ might already appear in $Q_{i+1}$. The lemma $\ell$ is first
computed as in \textsc{IC3}, but using a skolemized version of the
POB. Second, if any skolem constants remain in the lemma, then they are
re-abstracted into the original variables. The corresponding instance of $\ell$
is determined by the substitution $\sigma$ of the POB.
Note that the instance $\ell \sigma$ is well defined since $\abs$ abstracts skolem constants back into the variables (of sort $\Int$) that introduced them,
ensuring that $\fvars(\ell) \subseteq \dom(\sigma)$.
Note further that if $\ell$ has no free variables, then the substitution $\sigma$ is redundant and could be replaced by an empty substitution.
(In fact, it is always sufficient to project $\sigma$ to $\fvars(\ell)$.)

The soundness of \textbf{NewLemma} follows form the fact that every
lemma $(\ell, \sigma)$ that is added to the trace $\cT$ keeps the
trace inductive. Formally:
\begin{lemma}\label{lem:new-q-lemma}
  Let $(\ell, \sigma)$ be a quantified lemma added to $Q_{i+1}$ by
  \textbf{NewLemma}. Then,
  $\cF(\forall Q_{i}) \limp (\forall \ell')$.
\end{lemma}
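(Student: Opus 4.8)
The plan is to route the preconditions of \textbf{NewLemma}, which are stated in terms of the \emph{instances} $\qi(Q_i)$, through to a statement about the \emph{universal} frame $\forall Q_i$, and then to discharge the skolem constants that were introduced when the POB $m$ was skolemized. First I would unwind the rule: its precondition is exactly unsatisfiability of $\cF(\qi(Q_i)) \land m'_{\sk}$, so by the defining property of interpolation $L' = \Itp(\cF(\qi(Q_i)), m'_{\sk})$ satisfies $\cF(\qi(Q_i)) \limp L'$. Next, for every pair $(\ell_0,\sigma_0)\in Q_i$ we have $\forall\ell_0 \limp \ell_0\sigma_0$ by universal instantiation (legal since $\fvars(\ell_0)\subseteq\dom(\sigma_0)$), so $\bigwedge\forall Q_i \limp \bigwedge\qi(Q_i)$; and since $\cF(A)=(A(X)\land\Tr(X,X'))\lor\Init(X')$ is monotone in $A$ under entailment (conjoin $\Tr$, then disjoin $\Init(X')$), this yields $\cF(\forall Q_i) \limp \cF(\qi(Q_i)) \limp L'$.

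It then remains to identify $L'$ with $(\ell')_{\sk}$ and to universally quantify the skolems away. The interpolant $L'$ mentions only constants shared by its two arguments (plus, in the partial case, fresh $\Int$-sorted constants), hence only constants from $X'\cup\SK$; so the unpriming that produces $L$ is well defined and $L$ is over $X\cup\SK$. By the stipulated behaviour of $\abs$ on skolems ($\abs(\SK,\varphi_{\sk})=(\varphi,\_)$ with $\sk_i\mapsto\var{i}$), applying $\sk$ to $(\ell,\_)=\abs(\SK,L)$ recovers $L=\ell_{\sk}$, and because priming commutes with skolemization, $L'=(\ell')_{\sk}=\ell'\sk$; hence $\cF(\forall Q_i)\limp\ell'\sk$. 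Finally, I would observe that every lemma ever placed in the trace is skolem-free --- $Q_0=\{(\Init,\emptyset)\}$ with $\Init$ ground, \textbf{NewLemma} abstracts all of $\SK$ before storing $\ell$, and \textbf{Push} only relocates sub-disjuncts of existing lemmas --- so $\cF(\forall Q_i)$ contains none of the skolems $\{\sk_i\mid\var{i}\in\fvars(\ell')\}$ that occur in $\ell'\sk$. The standard generalization-over-fresh-constants principle then upgrades $\cF(\forall Q_i)\limp\ell'\sk$ to $\cF(\forall Q_i)\limp\forall\ell'$, which is the claim.

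I expect the skolem bookkeeping of the second paragraph to be the main obstacle: one must check that $L'$ contains only ``well-named'' constants from $\SK$ (so that unpriming and re-skolemization reconstruct $\ell'$ verbatim rather than up to renaming) and, crucially, that the skolems about to be universally quantified never leak into $\cF(\forall Q_i)$ --- which rests on the invariant that trace lemmas are skolem-free. The remaining ingredients (the interpolant property, universal instantiation, and monotonicity of $\cF$) are routine, and the partial-interpolation variant is subsumed by the same argument once the integer constants returned by $\pItp$ are universally quantified up front.
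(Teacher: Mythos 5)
Your proof is correct and rests on exactly the same three ingredients as the paper's: the interpolant property of $L'$, the entailment $\forall Q_i \limp \qi(Q_i)$ (with monotonicity of $\cF$), and the fact that the skolem constants of $\ell'\sk$ do not occur in $\cF(\forall Q_i)$, which licenses generalizing them away. The only difference is presentational --- you argue by a direct chain of entailments where the paper assumes a model of $\cF(\forall Q_{i}) \land \neg\forall\ell'$ and derives a contradiction via equi-satisfiability with $\cF(\forall Q_{i}) \land \neg L'$ --- so this is essentially the same proof.
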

\begin{proof}
  $\ell$ is $\abs(\SK, L)$, where
  $L' = \Itp (\cF(\qi(Q_i)), m'_{\sk})$. Therefore,
  $\cF(\qi(Q_i)) \land \neg L'$ is unsatisfiable.  Let
  $\Psi$ be
  $\cF(\forall Q_{i}) \land (\neg \forall \ell')$, and
  assume, to the contrary, that $\Psi$ is satisfiable.  Since no
  constants from $\SK$ appear in 
  $\cF(\forall Q_{i})$ and
  $\ell$ is $\abs(\SK, L)$, $\Psi$ is equi-satisfiable to
  $\cF(\forall Q_{i}) \land (\neg L')$. Let $M$ be the
  corresponding model. Then, in contradiction,
  $M \models \cF(\qi(Q_i)) \land (\neg L')$.  \qed
\end{proof}

Rules \textbf{Predecessor} 
and \textbf{NewLemma}
use $m'_{\sk}$ that is skolemized with our special skolem substitution
where $\sk(\var{i}) = \sk_i$. We note that while the skolem
constants in $m'_{\sk}$ are always a subset of $\SK$ and do not
overlap with $X \cup X'$, they may overlap the existing skolem constants
that appear in the rest of the formula (e.g., if the rest of the
formula contains $\qi(Q_{i-1})$, where the ground instances result
from previously blocked POBs and, therefore, also contain skolem
constants). In this sense, our skolemization appears non-standard. However,
all the claims in this section only rely on the fact that the range of
$\sk$ is $\SK$ and that $\SK$ is disjoint from $X \cup X'$, which
holds for $\sk$.  

\paragraph{\textbf{Push} rule.}
 \textbf{Push} is similar to its
\textsc{IC3} counterpart.
It propagates a
(potentially quantified) lemma to the next frame. The key difference is the use of quantified formulas
$\forall Q_i$ (and their instantiations $\qi(Q_i)$ in the
pre-condition of the rule. Thus, checking applicability of
\textbf{Push} requires deciding validity of a
quantified FOL formula, which is undecidable in general. In practice,,
we use a weaker, but decidable, variant of these rules. In particular,
we use a finite instantiation strategy to instantiate $\forall Q_i$ in
combination with all of the instantiations $\qi(Q_i)$ discovered by
\textsc{Quic3} before theses rules are applied. This ensures progress
(i.e., \textsc{Quic3} never gets stuck in an application of a rule) at
an expense of completeness (some lemmas are not pushed as far as
possible, which impedes divergence).

\begin{figure}[t]
  \centering
  \lstset{basicstyle=\ttfamily\scriptsize}
  \lstset{language=C,morekeywords={assert,assume,nd}}
\begin{lstlisting}
void init_arrray(int[] A, int sz) {
  1: for (int i = 0; i < sz; i++)  A[i] = 0;
  2: j = nd(); assume(0 <= j && j < sz);
  3: assert(A[j] == 0);}
\end{lstlisting}
  \vspace{-0.2in}
  \caption{An array manipulating program.}
  \label{fig:examples}
\end{figure}

We illustrate the rules on a simple array-manipulating program
\texttt{init\_array} shown in \cref{fig:examples}. In the program,
\texttt{assume} and \texttt{assert} stand for the usual assume and
assert statements, respectively, and \texttt{nd} returns a
non-deterministic value. We assume that the program is converted into
a safety problem as usual.  In this problem, a special variable $pc$
is used to indicate the program counter. The first POB found by
\textbf{Candidate} is $pc=3 \land \sel(A, j) \neq 0$. Its predecessor,
is $pc = 2 \land \sel(A, v_0)\neq 0 \land 0 \leq v_0 < sz$
and the corresponding substitution is
$(v_0 \mapsto j)$. Note that since $\pMBP$ could not eliminate $j$, it
was replaced by a free variable. Eventually, this POB is blocked, the
lemma that is added is
$\forall ((pc = 2 \land 0 \leq v_0 < sz) \limp \sel(A, v_0)=0)$.

\paragraph{Soundness.}
 We conclude this section by showing that applying \textsc{Quic3}
rules from \cref{alg:lquic} in any order is sound:  
%
\begin{lemma}
  If \textsc{Quic3} returns \emph{Cex}, then $P$ is not safe (and
  there exists a counterexample).  Otherwise, if \textsc{Quic3}
  returns \emph{Safe}, then $P$ is safe.
\end{lemma}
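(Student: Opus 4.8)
The plan is to prove the two implications separately, in each case reducing the claim to invariants that the rules maintain, which in turn follow from \cref{lem:pred} and \cref{lem:new-q-lemma} plus elementary facts about universal quantification. (I take for granted, as is standard in IC3 presentations, that the monomial $m$ of any proof obligation is satisfiable.) First I would establish, by induction on the number of applied rules, that after any sequence of applications the following hold: (a) the trace is \emph{inductive and syntactically monotone}, i.e. $\Init \limp \forall Q_0$, and for all $0 \le i < N$ we have $\bigwedge\forall Q_i \wedge \Tr \limp (\bigwedge\forall Q_{i+1})'$ and $\forall Q_{i+1} \subseteq \forall Q_i$; (b) for all $i < N$, $\bigwedge\qi(Q_i) \limp \neg\Bad$; and (c) for every $\langle \psi,\sigma,i\rangle \in \Queue$, every state satisfying $\exists\psi$ reaches a $\Bad$-state along finitely many $\Tr$-steps. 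The only rules touching $\cT$ or $N$ are \textbf{Unfold}, \textbf{NewLemma}, \textbf{Push}: (a) is preserved by \textbf{NewLemma} because \cref{lem:new-q-lemma} gives $\cF(\forall Q_i) \limp \forall\ell'$, which, since $\cF(A) = (A\wedge\Tr)\vee\Init'$, unpacks into $\Init\limp\forall\ell$ and $\bigwedge\forall Q_i\wedge\Tr\limp\forall\ell'$, and monotonicity propagates these to every frame $j\le i+1$ that $\ell$ is inserted into; (a) is preserved by \textbf{Push} by the rule's own precondition (using that $\varphi$ is added to \emph{all} $Q_j$ with $j\le i+1$ at once, and that $\bigwedge\forall Q_i \limp \bigwedge\qi(Q_i)$). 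Invariant (b) is preserved since \textbf{Unfold} raises $N$ only when $\qi(Q_N)\limp\neg\Bad$, and $Q_i$ — hence $\qi(Q_i)$ — only ever grows; (c) is preserved by \textbf{Candidate} (immediate from $m\limp\Bad$) and by \textbf{Predecessor} (\cref{lem:pred} together with the induction hypothesis on the parent POB, reading $\exists m'$ on a post-state as $\exists m$ on that state viewed as a pre-state).

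For the \emph{Cex} case: \textbf{Cex} fires only when $\langle m,\sigma,0\rangle \in \Queue$ for some $m,\sigma$. Such a frame-$0$ POB is created either by \textbf{Candidate} with $N=0$, whence $m$ is satisfiable and $m\limp\qi(Q_0)\wedge\Bad$; or by \textbf{Predecessor} from a parent POB at frame $1$ using a model $M$ of $\qi(Q_0)\wedge\Tr$ conjoined with the skolemized, primed image of that parent, in which case $M$ also satisfies the MBP output $\varphi$ with $m\sigma=\varphi$, so $M\models\exists m$. Since $(\Init,\emptyset)\in Q_0$ always, $\bigwedge\qi(Q_0)\limp\Init$, so in both cases there is a state satisfying $\Init\wedge\exists m$. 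By invariant (c), every state satisfying $\exists m$ reaches $\Bad$ along finitely many $\Tr$-steps; composing, some $\BMCk{k}(\Init,\Tr,\Bad)$ is satisfiable, so $P$ is not safe and a counterexample exists.

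For the \emph{Safe} case: \textbf{Safe} fires only when $\forall Q_i\subseteq\forall Q_{i+1}$ for some $i<N$; by the monotonicity part of (a), also $\forall Q_{i+1}\subseteq\forall Q_i$, so $\forall Q_i=\forall Q_{i+1}$ as sets. Put $\Inv := \bigwedge\forall Q_i$. I then verify the three conditions defining an inductive invariant. (i) $\Init\limp\Inv$: by (a), $\Init\limp\bigwedge\forall Q_0$, and $\forall Q_i\subseteq\forall Q_0$ by monotonicity, so $\bigwedge\forall Q_0\limp\Inv$. (ii) $\Inv\wedge\Tr\limp\Inv'$: by (a), $\bigwedge\forall Q_i\wedge\Tr\limp(\bigwedge\forall Q_{i+1})' = \Inv'$. (iii) $\Inv\limp\neg\Bad$: each $\forall\ell\in\forall Q_i$ entails every instance $\ell\sigma$, so $\Inv\limp\bigwedge\qi(Q_i)$, and $\bigwedge\qi(Q_i)\limp\neg\Bad$ by (b) since $i<N$. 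Hence $\Inv$ is an inductive invariant, and by the inductive-invariant criterion recalled in \cref{sec:preliminaries}, $P$ is safe.

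I expect the real work to lie in the bookkeeping of the first step, not in any single inference. One must carefully keep apart the quantified lemmas $\forall Q_i$ — which form the candidate invariant and carry the inductiveness obligations — from their ground instances $\qi(Q_i)$ — which appear in the algorithm's decidable checks, in particular the $\neg\Bad$ test inside \textbf{Unfold} — and confirm that each rule preserves (a)--(c) with the correct view used in each precondition. The two delicate rules are \textbf{NewLemma}, where \cref{lem:new-q-lemma} must be invoked at exactly the right frame and then propagated downward by monotonicity to the frames $j\le i+1$ at which $\ell$ is stored, and \textbf{Push}, where the added lemma is the disjunct $\varphi$ rather than $\varphi\vee\psi$, so one must use that $\varphi$ is inserted into every frame up to $i+1$ simultaneously (so that $\forall\varphi$ is already available in $\forall Q_i$ when checking inductiveness at frame $i+1$).
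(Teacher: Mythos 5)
Your proof is correct and takes the same route as the paper: the \emph{Cex} direction rests on \cref{lem:pred} (every POB reaches $\Bad$, and a frame-$0$ POB intersects $\Init$ via the $(\Init,\emptyset)$ entry of $Q_0$), and the \emph{Safe} direction rests on the inductive/monotone trace conditions, whose preservation under \textbf{NewLemma} is exactly \cref{lem:new-q-lemma}. The paper states this only as a two-sentence sketch; you have filled in the bookkeeping (invariants (a)--(c) and their preservation under each rule, including the $\forall Q_i \limp \qi(Q_i)$ step needed for \textbf{Push} and for condition (iii) of the invariant), and the details check out.
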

\begin{proof}
  The first case follows immediately from \cref{lem:pred}.  The second
  case follows from the properties of the inductive trace maintained
  by \textsc{Quic3} that ensure that whenever \emph{Safe} is returned
  (by \textbf{Safe} rule), a safe inductive invariant is obtained.
  \cref{lem:new-q-lemma} ensures that these properties are preserved
  whenever a new quantified lemma is added.
  Soundness of all other rules follows the same argument as
  the corresponding rules of \textsc{IC3}. \qed
\end{proof}

In fact, \textsc{Quic3} ensures a stronger soundness guarantee:
\begin{lemma}
  In every step of \textsc{Quic3}, for every $k < N$,
  the sequence 
  $\{\forall Q_{i}\}_{i=1}^{k}$ is an interpolation sequence of length $k$ for $P$.
\end{lemma}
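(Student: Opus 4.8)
The plan is to prove the statement as a global invariant maintained by \textsc{Quic3}, by induction on the number of applied rules. The base case is vacuous, since initially $N=0$ and the claim quantifies over $k<N$. For the inductive step the key observation is that the three defining conditions of an interpolation sequence of length $k$ for $P$ — namely $\Init \limp \forall Q_1$, $\forall Q_j \land \Tr \limp \forall Q_{j+1}'$ for $1 \le j \le k-1$, and $\forall Q_k \limp \neg\Bad$ — are almost immediate consequences of two properties that \textsc{Quic3} already maintains: (a) the inductive monotone trace conditions, i.e.\ $\Init \limp \forall Q_0$, $\forall Q_i \land \Tr \limp \forall Q_{i+1}'$ for all $0 \le i < N$, and $\forall Q_{i+1} \subseteq \forall Q_i$; and (b) the precondition that licensed each application of \textbf{Unfold}, which forces $\qi(Q_n) \limp \neg\Bad$ for every $n < N$. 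So the bulk of the work is to re-establish (a) and (b) as invariants, and then to read off the conclusion.

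First I would establish (a) — that every rule preserves the inductive monotone trace — by case analysis on the rules. \textbf{Safe}, \textbf{Cex}, \textbf{Candidate}, and \textbf{Predecessor} modify neither $\cT$ nor $N$, so there is nothing to check. \textbf{Unfold} only increments $N$; the newly exposed frame $Q_N$ is empty, so $\forall Q_N$ is the empty conjunction $\top$, and both $\forall Q_{N-1}\land\Tr \limp \forall Q_N'$ and $\forall Q_N \subseteq \forall Q_{N-1}$ hold trivially. For \textbf{NewLemma}, inductiveness of the added pair is exactly \cref{lem:new-q-lemma}: $\cF(\forall Q_i) \limp \forall \ell'$ unfolds, via $\cF(A) = (A\land\Tr)\lor\Init'$, into both $\forall Q_i \land \Tr \limp \forall\ell'$ and $\Init \limp \forall\ell$, the latter covering the case where the new pair is added to $Q_0$; syntactic monotonicity is preserved because \textbf{NewLemma} adds $(\ell,\sigma)$ to \emph{every} frame $Q_j$ with $j \le i+1$. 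For \textbf{Push}, inductiveness and $\Init \limp \forall\varphi$ are literally the rule's preconditions, and monotonicity holds for the same "add to all lower frames" reason as in \textbf{NewLemma}.

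Next I would establish (b): at every step and for every $n < N$, $\qi(Q_n) \limp \neg\Bad$. Since no rule ever removes a pair from a frame, each set $\qi(Q_n)$ only grows over time, so the conjunction of its elements only gets stronger, and thus the entailment $\qi(Q_n)\limp\neg\Bad$, once true, remains true. It becomes true precisely when \textbf{Unfold} advances $N$ past $n$, as that rule fires only under the precondition $\qi(Q_N)\limp\neg\Bad$ with $N$ equal to the value that then becomes $n$. Combining with the trivial fact that $\forall\ell \limp \ell\sigma$ for every pair $(\ell,\sigma)\in Q_n$, we obtain $\forall Q_n \limp \qi(Q_n) \limp \neg\Bad$ for all $n < N$.

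Finally, fixing $k < N$, the interpolation-sequence conditions follow: (i) $\Init \limp \forall Q_0$ by (a), and $\forall Q_0 \limp \forall Q_1$ because $\forall Q_1 \subseteq \forall Q_0$, hence $\Init \limp \forall Q_1$; (ii) for each $1 \le j \le k-1 < N$, $\forall Q_j \land \Tr \limp \forall Q_{j+1}'$ is an instance of (a); (iii) $\forall Q_k \limp \neg\Bad$ is the $n=k$ case of (b). I expect the only real obstacle to be the rule-by-rule verification in the first step — in particular, pinning down that it is \textbf{NewLemma}'s (and \textbf{Push}'s) behaviour of adding a lemma to \emph{all} frames below its target that preserves syntactic monotonicity, and that it is the $\Init'$ disjunct inside $\cF$ in \cref{lem:new-q-lemma} that keeps $\Init \limp \forall Q_0$ intact when a lemma is pushed all the way down; everything after that is bookkeeping.
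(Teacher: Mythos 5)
Your proof is correct and follows exactly the route the paper intends (the paper states this lemma without an explicit proof): the maintained inductive monotone trace conditions --- preserved by \textbf{NewLemma} via \cref{lem:new-q-lemma} and by \textbf{Push} via its preconditions --- yield conditions (i) and (ii), while the \textbf{Unfold} precondition $\qi(Q_n)\limp\neg\Bad$, together with $\forall Q_n \limp \qi(Q_n)$ and the fact that frames only grow, yields (iii). The only detail you leave implicit is inductiveness at the intermediate frames $0 < j+1 \le i$ to which a new lemma is copied, but this follows in one line from the monotonicity $\forall Q_i \subseteq \forall Q_j$ that you already establish.
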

Thus, if \textsc{Quic3} reaches $N>k$, then there are no
counterexample of length $k$.

\section{Progress and Counterexamples}
\label{sec:progress-cex}

Safety verification of transition systems described in the theory of
LIA and Arrays is undecidable in general. Thus, there is no
expectation that \textsc{Quic3} always terminates.  None-the-less, it is
desirable for such a procedure to have strong progress guarantees
-- the longer it runs, the more executions are explored. In this
section, we show how to orchestrate the rules defining \textsc{Quic3}
(shown in \cref{alg:lquic}) into an effective procedure that
guarantees progress in exploration and produces a shortest
counterexample, if it exists. 

%
\newcommand{\Cex}{\textsc{Cex}}
\newcommand{\Proof}{\textsc{Safe}}
\newcommand{\Unknown}{\textsc{Unknown}}
\newcommand{\Blocked}{\textsc{Blocked}}
\newcommand{\Null}{\textsc{Null}}

\newcommand{\Intersect}{\textsc{Intersect}}

\SetKwFunction{qadd}{Add}
\SetKwFunction{qsize}{Size}
\SetKwFunction{qempty}{Empty}
\SetKwFunction{qtop}{Top}
\SetKwFunction{qpop}{Remove}
\SetKwFunction{parent}{Parent}
\SetKwFunction{block}{Block}
\SetKwFunction{qblock}{QBlock}
\SetKwFunction{sat}{SAT}

\SetKwFunction{quipRecBlockCube}{Quic3\_MakeSafe}
\SetKwFunction{quipPush}{Quic3\_Push}

\begin{figure}[t]
  \begin{algorithm}[H]
\DontPrintSemicolon
\LinesNumbered
\SetKw{Let}{let}
\SetKwFor{Forever}{forever}{do}{end}
\SetKw{Break}{break}
\SetKw{Continue}{continue}

\BlankLine
$N\gets 0$; $Q_0 = \{(\Init,\emptyset)\}$ \;
\uIf{$\Init \wedge \Bad$} {
  \Return \Cex\;
}
\While { (true) } {
  $N\gets N+1$; \quad $Q_N \gets \emptyset$ \;
  \uIf {$\quipRecBlockCube(\Bad, \emptyset, N) = \Cex$} {
     \Return \Cex\;
   }
  \uIf {$\quipPush() = \Proof$} {
     \Return \Proof\;
  }
}
\label{alg:quip-main}
\end{algorithm}
\vspace{-0.2in}
\caption{Main Procedure ($\texttt{Quic3\_Main}$). Wlog, we assume that
  $\Bad$ is a monomial.}
  \label{fig:quip-main}
\end{figure}

\begin{figure}[t]
\begin{algorithm}[H]
\DontPrintSemicolon
\LinesNumbered
\SetKw{Let}{let}
\SetKwFor{Forever}{forever}{do}{end}
\SetKw{Break}{break}
\SetKw{Continue}{continue}
\KwIn{(Cube $m_0$, Substitution $\sigma_0$, Level $i_0$)}
\KwData{
Queue $\Queue$ of triples $\langle m, \sigma, i \rangle$, where
  $m$ is a cube, $\sigma$ is a substitution and $i$ is a level }

$\Queue =\emptyset$\;
\tcp{Apply \textbf{Candidate} rule}
$\qadd(\Queue, \langle m_0, \sigma_0, i_0\rangle)$\;
\While{$\neg\qempty(\Queue)$} {
  $\langle m, \xi, i\rangle \gets \qtop(\Queue)$\;
\uIf{$i=0$} { 
      \tcp{Apply \textbf{Cex} rule; Found a counterexample} 
       \Return \Cex\;
 }
$M \gets \sat(\qi(Q_{i-1}) \wedge \Tr \wedge (m'_{\sk})) $ \;
    \uIf{$M \neq \bot$} {
        \tcp{Apply \textbf{Predecessor} rule}
        $(\varphi, U) \gets \pMBP (X'\cup \SK, \Tr \land m'_{\sk}, M)$ \;
        $(\psi, \sigma) \gets \abs(U, \varphi)$ \;
        $\qadd(\Queue,\langle \psi, \sigma, i-1 \rangle)$ \;
    }
    \uElse {
        $\qpop(\Queue, \langle m,\xi, i\rangle)$ \;
        $L' \gets \Itp (\qi(Q_{i-1} \wedge \Tr), m'_{\sk})$\;
        \tcp{Abstract all skolem constants}
        $(\ell, \_ ) \gets \abs(\SK, L)$\;
        \tcp{Optional quantified generalization (see Sec. 5)}
        $(\ell,\xi) \gets \texttt{QGen}(\ell, \langle m, \xi, i \rangle)$\;  \label{line:qgen}
        \tcp{Apply \textbf{NewLemma} rule}
        forall $j\leq i$, $Q_j \gets Q_j \cup \{(\ell, \xi)\}$\; \label{line:add-qlemma}
    }

}
\Return \Blocked

\label{alg:sis-new}
\end{algorithm}
\vspace{-0.1in}
  \caption{$\texttt{Quic3\_MakeSafe}$ procedure of \textsc{Quic3}.}
  \label{fig:quip-rec-block-cube}
\end{figure}

\paragraph{Realization of \textsc{Quic3}.} \cref{fig:quip-main}
depicts procedure $\texttt{Quic3\_Main}$ -- an instance of
\textsc{Quic3} where each iteration, starting from $N=0$, consists of
a $\texttt{Quic3\_MakeSafe}$ phase followed by a
$\texttt{Quic3\_Push}$ phase. The $\texttt{Quic3\_MakeSafe}$ phase,
described in \cref{fig:quip-rec-block-cube}, starts by initializing
$\Queue$ to the POB $(\Bad, \emptyset, N)$ (this is a degenerate
application of \textbf{Candidate} that is sufficient when $\Bad$ is a
monomial). It then applies \textbf{Predecessor} 
and \textbf{NewLemma} iteratively until either a counterexample is
found or $\Queue$ is emptied. 
\textbf{NewLemma} is preceded by an optional generalization procedure
(\cref{line:qgen}) that may introduce additional quantified variables
and record the constants that they originated from by extending the
substitution $\xi$.
We defer discussion of this procedure to \cref{sec:quant-gener}; in the simplest case, it will return the same lemma with the same substitution $\xi$.
At the end of $\texttt{Quic3\_MakeSafe}$, the
trace 
$(Q_i)_i$ is an interpolation sequence of length $N$.  The
$\texttt{Quic3\_Push}$ applies 
\textbf{Push}
iteratively from frame $i=1$ to $i=N$. The corresponding
satisfiability queries are restricted to use the existing instances of
quantified lemmas and a finite set of instantiations pre-determined by
heuristically chosen triggers.  If, as a result of pushing, two
consecutive frames become equal (rule \textbf{Safe}),
$\texttt{Quic3\_Main}$ returns \emph{Safe}.

\paragraph{Progress.} Recall that we use a \emph{deterministic}
skolemization procedure.  Namely, for a POB $\langle m,\xi,i \rangle$,
in every satisfiability check of the form
$\qi(Q_{i-1}) \wedge \Tr \wedge (m'_{\sk})$, the same
skolem substitution (defined by $\sk(v_i) =\sk_i$)
is used in $m'_{\sk}$, even if the rest of the formula (i.e., 
$\qi(Q_{i-1})$) changes.
The benefit of using a deterministic skolemization procedure is that
it ensures that all applications of $\pMBP$ in \textbf{Predecessor} use exactly the same formula $Tr\land m'_{\sk}$ and exactly the same set of constants. As a result,
the number of predecessors (POBs) generated by applications of
\textbf{Predecessor} for each POB is bounded by the finite range of
$\pMBP$ in its third (model) argument: 

\begin{lemma}
If a deterministic skolemization is used, then for each POB $\langle m,\xi,i \rangle$, the number of POBs generated by applying
\textbf{Predecessor} on $\langle m,\xi,i \rangle$ is finite.
\end{lemma}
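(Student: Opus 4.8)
The plan is to show that all the proof obligations generated from a fixed POB $\langle m,\xi,i\rangle$ by repeated applications of \textbf{Predecessor} come from a finite set, and hence only finitely many distinct ones can ever be added to $\Queue$. The crucial observation is that \textbf{Predecessor} builds the new POB as $(\psi,\sigma)=\abs(U,\varphi)$ where $(\varphi,U)=\pMBP(X'\cup\SK,\Tr\land m'_{\sk},M)$ for some model $M$ of $\qi(Q_{i-1})\land\Tr\land m'_{\sk}$. First I would isolate the inputs to $\pMBP$ on which the result depends: its first argument is the fixed set $X'\cup\SK$, and its second argument is $\Tr\land m'_{\sk}$. Here $\Tr$ is fixed by the safety problem, and $m$ is fixed because we are looking at a single POB; the only thing that could vary is $m'_{\sk}$, i.e. the skolemized, primed copy of $m$. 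This is exactly where determinism of the skolemization is used: because $\sk$ is the fixed substitution $\sk(\var{i})=\sk_i$, the term $m'_{\sk}$ is the \emph{same} formula in every invocation, regardless of how $\qi(Q_{i-1})$ has evolved in the meantime. Hence the second argument of $\pMBP$ is a single fixed formula $\Tr\land m'_{\sk}$ across all applications of \textbf{Predecessor} to this POB.

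Next I would invoke the finite-range property of $\pMBP$ (property (v) in the definition of $\pMBP$ in the preliminaries): for a fixed pair of first two arguments $U_0=X'\cup\SK$ and $\varphi_0=\Tr\land m'_{\sk}$, the set $\{\pMBP(U_0,\varphi_0,M)\mid M\models\varphi_0\}$ is finite. Therefore there are only finitely many pairs $(\varphi,U)$ that can arise, and hence—since $\abs$ is a deterministic function of $(U,\varphi)$ up to the fixed convention that skolem constants go back to their originating variables—only finitely many pairs $(\psi,\sigma)$. The target frame index of the generated POB is always $i-1$, which is also fixed. So the set of triples $\langle\psi,\sigma,i-1\rangle$ that \textbf{Predecessor} can produce from $\langle m,\xi,i\rangle$ is finite. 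Since each application of \textbf{Predecessor} to this POB adds one element of this finite set to $\Queue$, at most finitely many distinct POBs are generated; this is the claimed bound.

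The one subtlety I expect to be the main obstacle — and the reason the hypothesis "deterministic skolemization" is needed — is ruling out the scenario in which $\qi(Q_{i-1})$ grows over time and thereby changes the constraint, making the relevant $\pMBP$ call take a different (and potentially unbounded sequence of) second argument. The point to make carefully is that $\qi(Q_{i-1})$ enters only the \emph{satisfiability check} $\qi(Q_{i-1})\land\Tr\land m'_{\sk}$ that guards the rule and restricts which models $M$ are admissible; it does \emph{not} enter the call $\pMBP(X'\cup\SK,\Tr\land m'_{\sk},M)$ itself. So even as $\qi(Q_{i-1})$ changes, the MBP is always applied to the same $\Tr\land m'_{\sk}$, and its finite-range guarantee applies. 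Restricting the admissible models to a shrinking set can only shrink the set of reachable outputs, never enlarge it. I would phrase this as: for every $M$ with $M\models\qi(Q_{i-1})\land\Tr\land m'_{\sk}$ we in particular have $M\models\Tr\land m'_{\sk}$, so $\pMBP(X'\cup\SK,\Tr\land m'_{\sk},M)$ lies in the fixed finite range, and conclude. Without deterministic skolemization, different runs of \textbf{Predecessor} could skolemize $m$ with fresh constants, producing infinitely many syntactically distinct but logically equivalent formulas $m'_{\sk}$, and the finite-range argument would break; noting this explicitly justifies the lemma's hypothesis.
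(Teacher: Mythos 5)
Your core argument is correct and coincides with the final step of the paper's own proof: deterministic skolemization makes $m'_{\sk}$, and hence the first two arguments of $\pMBP$, identical across all invocations triggered by $\langle m,\xi,i\rangle$, so the finite-range property of $\pMBP$ bounds the set of possible outputs $(\varphi,U)$, and $\abs$ then yields only finitely many distinct predecessor POBs, all at the fixed level $i-1$. Your observation that $\qi(Q_{i-1})$ only filters the admissible models and never enters the $\pMBP$ call itself is exactly the right point, as is your remark about why fresh skolems would break the argument.

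However, the paper's proof spends most of its effort on something you omit, and that omission matters for how the lemma is used. You bound the number of \emph{distinct} POBs that can ever be produced; you do not show that \textbf{Predecessor} is applied only finitely often, i.e., that the same predecessor is not regenerated indefinitely after it has been blocked and removed from $\Queue$. The paper closes this by arguing non-rediscovery: when the predecessor $\langle\varphi,\sigma,i-1\rangle$ is blocked, the new lemma $(\ell,\xi)$ added to $Q_{i-1}$ satisfies $\ell\xi\land\varphi\xi\equiv\bot$ (tracing the skolem abstraction back through the interpolant), so any later model $M\models\qi(Q_{i-1})\land\Tr\land m'_{\sk}$ satisfies $\ell\xi$ and hence $M\not\models\varphi\xi$; since $\pMBP$'s output must be satisfied by $M$ (property (iv)), that output can no longer be $\varphi\xi$. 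Combined with your finite-range argument, each element of the finite range is exhausted at most once, which is what the subsequent progress lemma (termination of each blocking phase, hence reaching every $N$) actually requires. Under the literal reading ``finitely many distinct POBs'' your proof is complete, but you should add the non-rediscovery step to support the claim of finitely many rule applications that the paper derives from this lemma.
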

\begin{proof}
For simplicity, we ignore the application of quantified generalization; the proof extends to handle it as well.
After a quantified lemma $(\ell,\xi)$ is added to $Q_{i-1}$,
every model 
$M \models \qi(Q_{i-1}) \land \Tr \land m'_{\sk}$ that is discovered when applying \textbf{Predecessor} on $\langle m,\xi,i \rangle$ will be such that $M \models \ell \xi$.
Recall that the lemma was generated by a POB $\langle \varphi, \sigma, i-1 \rangle$ that was blocked since
$\qi(Q_{i-2}) \land \Tr \land \varphi'_{\sk}$ was unsatisfiable, and $(\ell,\_) = \abs(\SK,L)$ where
$L' = \Itp (\qi(Q_{i-2} \wedge \Tr), \varphi'_{\sk})$. Therefore $L \wedge \varphi_{\sk} \equiv \bot$. Since $\abs$ maps each skolem constant back to the variable that introduced it, we have that the skolems in $L$ are abstracted to the original variables from $\varphi$. Hence, $\ell \wedge \varphi \equiv \bot$, which implies that $\ell\xi \wedge \varphi \xi \equiv \bot$. 
Thus, if
$M \models \qi(Q_{i-1}) \land \Tr \land m'_{\sk}$ then $M \not \models \varphi \xi$.
Therefore, $\pMBP(X'\cup\SK, Tr\land m'_{\sk}, M) \neq (\varphi \xi,\_ )$.
Meaning, once the POB that generated the lemma
was blocked, it cannot be rediscovered as a predecessor of $\langle m,\xi,i \rangle$.
Since the first two arguments of $\pMBP$ are the same in all applications of \textbf{Predecessor} on $\langle m,\xi,i \rangle$ (due to the deterministic skolemization), the finite range of $\pMBP$ implies that only finitely many predecessors are generated for the POB $\langle m,\xi,i \rangle$.
\qed
\end{proof}

Thus, for any value of $N$, there is only a finite number of POBs that
are added to $\Queue$ and processed by the rules, resulting in a
finite number of rule applications.  Moreover, since
$\texttt{Quic3\_Push}$ restricts the use of quantified lemmas to
existing ground instances and a finite instantiation scheme, and since
the other rules also use only these instances, all satisfiability
queries posed to the solver are of quantifier-free formulas in the
combined theories of LIA and Arrays, and as a result guaranteed to
terminate. This means that each rule is terminating.  Therefore,
$\texttt{Quic3\_Main}$ always makes progress in the following sense:

\begin{lemma} \label{lem:progress}
For every $k \in \nat$, $\texttt{Quic3\_Main}$ either reaches $N = k$, returns \emph{Safe},, or finds a counterexample.
\end{lemma}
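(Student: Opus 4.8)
The plan is to fix an arbitrary $k \in \nat$ and argue that $\texttt{Quic3\_Main}$ cannot run forever with the outer counter $N$ staying below $k$ without having already returned \emph{Safe} or \emph{Cex}. Since the outer \texttt{while} loop increments $N$ on each iteration, it suffices to show that each single iteration of that loop terminates; then either some iteration returns a result, or the loop keeps incrementing $N$ and we eventually reach $N=k$. So the proof reduces to showing that both phases of an iteration --- the call to $\texttt{Quic3\_MakeSafe}$ and the call to $\texttt{Quic3\_Push}$ --- terminate.

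For $\texttt{Quic3\_Push}$ the argument is short: it runs a bounded loop over frames $i = 1, \ldots, N$, and within each frame the applicability check for \textbf{Push} is, by design of the realization, posed as a quantifier-free satisfiability query in the combined theory of LIA and Arrays (the quantified lemmas $\forall Q_i$ are replaced by their existing ground instances $\qi(Q_i)$ together with a fixed finite trigger-based instantiation set). Quantifier-free satisfiability in this combined theory is decidable, so each query terminates, and hence the whole phase does. For $\texttt{Quic3\_MakeSafe}$, the key is to bound the POB queue $\Queue$. By the preceding lemma, under deterministic skolemization each POB spawns only finitely many distinct predecessor POBs via \textbf{Predecessor}, since the first two arguments of $\pMBP$ are identical across all such applications and $\pMBP$ has finite range in its model argument; moreover once a POB is blocked and its lemma added, that POB is never regenerated as a predecessor of the same parent. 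Since the starting POB is $(\Bad, \emptyset, N)$ at level $N$ and each \textbf{Predecessor} step strictly decreases the level, the tree of POBs generated has depth at most $N$ and finite branching, hence is finite; thus only finitely many POBs are ever added to $\Queue$. Each POB, once processed, is either turned into finitely many children and (eventually) popped when blocked, or immediately yields \Cex{} at level $0$; the associated satisfiability checks, the $\pMBP$ computation, and the interpolation call $\Itp$ are all over quantifier-free formulas and therefore terminate. Consequently $\texttt{Quic3\_MakeSafe}$ processes $\Queue$ to exhaustion (or returns \Cex) in finitely many steps.

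Combining the two, each iteration of the outer loop terminates, so after finitely many iterations $\texttt{Quic3\_Main}$ has either returned \emph{Safe}, returned \emph{Cex}, or incremented $N$ up to $k$; this is exactly the claimed progress property. The main obstacle --- and the place where care is needed --- is the finiteness of the POB tree: one must be sure that the quantified-lemma instances $\qi(Q_{i-1})$ accumulating in the solver context do not cause \textbf{Predecessor} to keep producing fresh predecessors for a fixed POB. This is precisely what the deterministic skolemization buys us, via the cited lemma: it pins down the formula and constant set fed to $\pMBP$ so that the finite-range property applies uniformly, and the $\ell\xi \land \varphi\xi \equiv \bot$ computation guarantees monotone elimination of already-blocked predecessors. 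I would also remark that the argument as stated elides the optional quantified generalization on \cref{line:qgen}; as in the proof of the previous lemma, this extends without difficulty, since generalization only strengthens the lemma added to $Q_j$ and does not reopen blocked POBs.
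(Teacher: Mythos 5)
Your proposal is correct and follows essentially the same route as the paper: the paper justifies this lemma by the immediately preceding discussion, which combines the finite-predecessor lemma (deterministic skolemization plus the finite range of $\pMBP$) with the observation that all solver queries are quantifier-free and hence decidable, to conclude that each phase of an iteration terminates and $N$ keeps increasing. Your explicit bounded-depth/finite-branching tree argument for the POB queue is just a slightly more detailed rendering of the paper's claim that ``for any value of $N$, there is only a finite number of POBs that are added to $\Queue$.''
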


\paragraph{Shortest Counterexamples.} $\texttt{Quic3\_Main}$ increases
$N$ only after an interpolation sequence of length $N$ is obtained, in
which case it is guaranteed that no counterexample up to this length
exists.  Combined with \cref{lem:progress} that ensures progress, this
implies that $\texttt{Quic3\_Main}$ always find a shortest
counterexample, if one exists:

\begin{corollary}
  If there exists a counterexample, then $\texttt{Quic3\_Main}$ is
  guaranteed to terminate and return a shortest counterexample.
\end{corollary}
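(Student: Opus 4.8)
The plan is to combine the two facts already established: progress (\cref{lem:progress}) and the "shortest counterexample" property inherited from the fact that $N$ is only incremented after an interpolation sequence of length $N$ has been built. First I would fix a safety problem $P$ that does have a counterexample, and let $k$ be the length of a \emph{shortest} one; such a $k$ exists by well-ordering of $\nat$. The goal is to show that $\texttt{Quic3\_Main}$ terminates and that when it does, it returns a counterexample of length exactly $k$.

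The first step is termination. By \cref{lem:progress}, for every value of $N$ the procedure either reaches that value of $N$, returns \emph{Safe}, or finds a counterexample. I would argue that $\texttt{Quic3\_Main}$ cannot return \emph{Safe}: returning \emph{Safe} means some two consecutive frames $\forall Q_i \subseteq \forall Q_{i+1}$ with $i < N$, which by the soundness argument (the \textbf{Safe} rule yields a genuine inductive invariant) would make $P$ safe, contradicting the existence of a counterexample. Hence the only two alternatives left are that the procedure reaches arbitrarily large $N$, or that it halts with \emph{Cex}. I then need to rule out the first alternative (running forever while $N \to \infty$): if $\texttt{Quic3\_Main}$ reached $N = k+1$ without ever returning \emph{Cex}, then at the end of the $\texttt{Quic3\_MakeSafe}$ phase for $N = k$ the trace $(Q_i)_{i=1}^{k}$ would be an interpolation sequence of length $k$ for $P$ (as noted just before \cref{sec:progress-cex} and restated after \cref{fig:quip-rec-block-cube}), and by the property ``if an interpolation sequence of length $k$ exists, then the transition system has no counterexample of length $k$'' from the Preliminaries, $P$ would have no counterexample of length $k$ — contradicting the choice of $k$. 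Therefore the procedure must return \emph{Cex} at some iteration with $N \le k$.

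The second step is to pin down the length of the counterexample produced. When $\texttt{Quic3\_Main}$ returns \emph{Cex}, it does so from inside $\texttt{Quic3\_MakeSafe}$ (line with the \textbf{Cex} rule), at an iteration with index value $N = k_0$ for some $k_0 \le k$. The \textbf{Cex} rule fires when a POB $\langle m, \sigma, 0\rangle$ reaches frame $0$; unwinding the chain of \textbf{Predecessor} applications that produced it, together with \cref{lem:pred} (every state in a generated POB has a $\Tr$-successor in its parent POB, and the root POB lies in $\Bad$ while the $0$-level POB is consistent with $\Init$ via the $\cF$-style check at frame $0$), yields a concrete execution of length $k_0$ from $\Init$ to $\Bad$, i.e. a counterexample of length $k_0$. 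Since $k$ is the length of a \emph{shortest} counterexample we have $k_0 \ge k$; combined with $k_0 \le k$ from the first step, $k_0 = k$. Thus the returned counterexample has length exactly $k$, which is what the corollary claims.

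The main obstacle I anticipate is the bookkeeping in the second step: formally extracting an honest length-$k_0$ execution from the POB chain requires inducting along the \textbf{Predecessor} applications and invoking \cref{lem:pred} at each link, and being careful that the free (existentially quantified) variables in the POBs are instantiated consistently across the chain — this is exactly why the Skolem substitution $\sk$ and the recorded substitutions $\sigma$ were set up deterministically. The termination half is comparatively routine given \cref{lem:progress} and the interpolation-sequence invariant; the only subtlety there is making explicit that the procedure never returns \emph{Safe} when a counterexample exists, which follows from the soundness lemma already proved.
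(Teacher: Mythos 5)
Your proposal is correct and follows essentially the same route as the paper, which justifies the corollary by combining the progress guarantee of \cref{lem:progress} with the fact that $N$ is only incremented after an interpolation sequence of length $N$ is built (ruling out counterexamples of any length below the current $N$). The extra detail you supply — ruling out \emph{Safe} via soundness and extracting a length-$k_0$ execution from the POB chain using \cref{lem:pred} — only makes explicit what the paper leaves implicit.
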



\section{Quantified Generalization}
\label{sec:quant-gener}

\textsc{Quic3} uses quantified POBs to generate quantified lemmas.
However, these lemmas are sometimes too specific, hindering
convergence. This is addressed by \emph{quantified generalization}
(\qgen), a key part of \quic.  The \textsc{Quic3} rules in
\cref{alg:lquic} are extended with the rule \textbf{QGen} shown in
\cref{alg:qgen_rule}, and \texttt{Quic3\_MakeSafe}
(\Cref{fig:quip-rec-block-cube}) is extended with a call to
\texttt{QGen}, which implements \textbf{QGen}, before a new lemma is
added to its corresponding frame.

\begin{algorithm}[t]
    \begin{description}
    \setlength\itemsep{0.05in}
  \item[QGen] For $0 \leq i < n$ and a lemma
    $(\ell,\xi)\in Q_{i+1}$, let $g$ be a formula and $\sigma$ a
    substitution such that (i) $g \sigma \equiv \ell \xi$, (ii) $\fvars(\ell) \subseteq \fvars(g)$, and (iii)
    $\cF(\qi(Q_i)) \to \forall g'$. Then, add $(g, \sigma)$ to $Q_j$
    for all $0 \leq j \leq i+1$.
    \end{description}
    \caption{\textbf{QGen} rule for Quantified Generalization in \textsc{Quic3}.}
    \label{alg:qgen_rule}
    \vspace{-8pt}
\end{algorithm}

\paragraph{\textbf{QGen} rule.}
\textbf{QGen} generalizes a (potentially quantified) lemma
$(\ell,\xi) \in Q_{i+1}$ into a new quantified lemma $(g, \sigma)$
such that $(\forall g) \to (\forall\ell)$ is valid, i.e., the new
lemma $g$ is stronger than $\ell$.  The new quantified lemma $g$ and a
substitution $\rho$ (s.t. $g\rho \equiv \ell$) are constructed by
abstracting some terms of $\ell$ with fresh universally quantified
variables. If the new formula $\forall g$ is a valid lemma, i.e.,
$\cF(\qi(Q_i))\to\forall g'$ is valid, then \textbf{QGen} adds
$(g,\sigma)$ to $Q_j$ for $0\leq j\leq i+1$, where
$\sigma = \xi | \rho$. Note that the check ensures that the new lemma
maintains the interpolation sequence property of the trace. In the
rest of this section, we describe two heuristics to implement
\textbf{QGen} that we found useful in our benchmarks.

\paragraph{Simple \qgen} abstracts a single term in the input lemma
$\ell$ by introducing one \emph{additional} universally quantified
variable to $\ell$. In the new lemma $g$, the new variable $v$ appears
only as an index of an array (e.g., $\sel(A,v)$) or as an offset
(e.g., $\sel(A, i+v)$). Simple \qgen considers all $\sel$ terms in
$\ell$ and identifies sub-terms $t$ of index terms for which $\ell$
imposes lower and upper bounds. Each term $t$ is abstracted in turn
with bounds used as guards. For example, if $\ell$ is
$0 < sz \to (\sel(A, 0) = 42)$ and $t = 0$ of $\sel(A, 0)$, then a
candidate $(g, \sigma)$ is
$0 \leq v_0 < sz \to \sel(A, v_0) = 42$, and
$\{v_0 \mapsto 0\}$, where $v_0$ is universally quantified.

\paragraph{Arithmetic \qgen.}
Simple \qgen does not infer correlations neither between abstracted
terms nor between index and value terms. For example, it is unable to
create a lemma of the form
$\forall v\cdot 0\leq v < sz \to (\sel(A,v)= exp(v))$, where $exp(v)$
is some linear expression involving $v$. \emph{Arithmetic} \qgen
addresses this limitation by extracting and generalizing a correlation
between interpreted constants in the input lemma $\ell$. Arithmetic
\qgen works on lemmas $\ell$ of the form
$(\psi\land \phi_0\land\cdots\land\phi_{n-1}) \to \phi_n$, where there
is a formula $p(\vec{v})$ with free variables $\vec{v}$ and a set of
substitutions $\{\sigma_{k}\}_{k=0}^{n}$ s. t. $\phi_k = p
\sigma_k$. For example, $\ell$ is
$((1 < sz)\land (\sel(A,0) = 42))\to (\sel(A,1) = 44)$, where $p(i,j)$ is
$\sel(A, i) = j$, $\sigma_0$ is $\{i\mapsto 0, j\mapsto 42\}$, and
$\sigma_1$ is $\{i\mapsto 1, j\mapsto 44\}$. The substitutions can be
viewed as \emph{data points} and generalized by a convex hull, denoted
$ch$. For example, $ch(\{\sigma_0, \sigma_1\})$ =
$0\leq i \leq 1 \land j = 2i + 42$.
The lemma $\ell$ is strengthened by replacing the substitution of
$\phi_n$ with the convex hull by rewriting $\ell$ into
\mbox{$\forall \vec{v} \cdot (ch(\{\sigma_1, \ldots, \sigma_n\}) \land
  \psi \land \phi_0 \cdots \land \phi_{n-1}) \to p(\vec{v})$.}  In our
running example, this generates
$\forall i, j \cdot ( 0\leq i \leq 1 \land j = 2i + 42 \land 1 <
sz)\land (\sel(A,0) = 42))\to (\sel(A,i) = j)$. Note that only
$\phi_n$ is generalized, while all other $\phi_k$, $0 \leq k < n$,
provide the data points. Applying standard generalization might
simplify the lemma further by dropping $(\sel(A,0) = 42)$ and
combining $i \leq 1 \land 1 < sz$ into $1 < sz$, resulting in
$\forall i \cdot ( 0\leq i \leq sz)\to (\sel(A,i) = 2i+42)$. Note that
arithmetic \qgen applies to arbitrary linear arithmetic terms by
replacing the convex hull ($ch$) with the polyhedral join ($\sqcup$).

These two generalizations are sufficient for our benchmarks. However,
the power of \quic comes from the ability to integrate additional
generalizations, as required. For example, arithmetic \qgen can be
extended to consider not only a single lemma, but also mine other
existing lemmas for potential data points.

\section{Experimental Results}
\label{sec:experiments}

We have implemented \quic within the CHC engine of
Z3~\cite{DBLP:conf/tacas/MouraB08,DBLP:conf/cav/HoderBM11} and
evaluated it on 
array manipulating C programs from
SV-COMP~\cite{DBLP:conf/tacas/Beyer17} and
from~\cite{DBLP:conf/esop/DilligDA10}. We have converted C programs to
CHC using \textsc{SeaHorn}~\cite{DBLP:conf/cav/GurfinkelKKN15}. In
most of these examples, array bounds are fixed constants. We have
manually generalized array bounds to be symbolic to ensure that the
problems require quantified invariants. Note, however, that our
approach is independent of the value of the array bound (concrete or
symbolic). We stress that using \textsc{SeaHorn} prevents us from
using the ``best CHC encoding'' for a given problem, which is
unfortunately a common evaluation practice. By using \textsc{SeaHorn}
as is, we show how \textsc{Quic3} deals with complex realistic
intermediate representation. For example, \textsc{SeaHorn} generates
constraints supporting memory allocation and pointer arithmetic. This
complicates the necessary inductive invariants even for simple
examples. While we could have used a problem-specific encoding for
specially selected benchmarks, such an encoding does not uniformly
extend to all SV-COMP benchmarks.

Experiments were done on a Linux machine with an Intel E3-1240V2 CPU
and a timeout of 300 seconds. The source code for \textsc{Quic3} is
available in the main Z3 repository at
\url{https://github.com/Z3Prover/z3}. The CHC for all the benchmarks
are available at \url{https://github.com/chc-comp/quic3}. The results
for the safe instances -- the most interesting -- are shown in
Table~\ref{tbl:results}.  We compare with the \spacer engine of
Z3. \spacer supports arrays, but not quantifiers.  As expected,
\spacer times out on all of the benchmarks. We emphasize the
difference in the number of lemmas discovered by both
procedures. Clearly, since \quic discovers quantified lemmas, it
generates significantly fewer lemmas than \spacer. Each quantified
lemma discovered by \quic represents many ground lemmas that are
discovered by \spacer.

As shown in Table~\ref{tbl:results}, \quic times out on some of the
instances. This is due to a deficiency of the current implementation
of \qgen. Currently, \qgen only considers one candidate for abstraction,
and generalization fails if that candidate fails. Allowing \qgen
to try several candidates should solve this issue.

Unfortunately, we were unable to compare \quic to other related
approaches.  To our knowledge, tools that participated in SV-COMP~2018
are not able to discover the necessary quantified invariants and often
use unsound (i.e., bounded) inference. The closely related
tools, including \textsc{Safari}~\cite{DBLP:conf/cav/AlbertiBGRS12},
\textsc{Booster}~\cite{DBLP:conf/atva/AlbertiGS14},
and~\cite{DBLP:conf/esop/DilligDA10} are no longer available. Based on
our understanding of their heuristics, the invariants required in our
benchmarks are outside of the templates supported by these heuristics.

\begin{table}[t]
\centering
\caption{Summary of results. TO is timeout;
\emph{Depth} is the size of inductive trace;
\emph{Lemmas} and \emph{Inv} are the  number
of lemmas discovered overall and in \mbox{invariant, respectively.}}\label{tbl:results}
\scalebox{0.6}{
  \begin{tabular} {||l || c  | c | c | c || c | c ||}
    \hline
    \bf Benchmark &  \multicolumn{4}{|c||}{\textsc{\quic}} & \multicolumn{2}{|c||}{Z3/\textsc{Spacer}}\\
    \cline{2-7}
    & \bf Depth & \bf Lemmas & \bf Inv & \bf Time [s] & \bf Depth & \bf Lemmas\\
    \hline
    array-init-const & 6 & 24 &7 & 0.14 & 130 & 4,483 \\
array-init-partial & 9 & 45 & 12 & 0.34 & 126 & 4,224 \\
array-mono-set & 6 & 25 & 9 & 0.22 & 70 & 2,436 \\
array-mono-tuc & 6 & 25 & 9 & 0.21 & 70 & 2,422 \\
array-mul-init-tuc & 129 & 8,136 &  -- & TO & 131 & 8,393 \\
array-nd-2-c-true & 6 & 37 & -- & TO & 39 & 1,482 \\
array-reverse & 6 & 21 & 5 & 0.18 & 144 & 729 \\
array-shadowinit-tuc & 30 & 252 & -- & TO & 99 & 5,005 \\
array-swap & 13 & 136 & 64 & 6.38 & 45 & 2,700 \\
array-swap-twice & 14 & 155 & -- & TO & 45 & 2,991 \\
sanfoundry-02-tucg & 11 & 89 & 31 & 1.57 & 46 & 1,986 \\
sanfoundry-10-tucg & 11 & 71 & 23 & 0.67 & 109 & 3,245 \\
sanfoundry-27-tucg & 6 & 24 & 7 & 0.14 & 131 & 4,568 \\
std-compMod-tucg & 10 & 120 & 61 & 5.48 & 58 & 3,871 \\
std-copy1-tucg & 6 & 33 & 14 & 0.33 & 89 & 4,035 \\
std-copy2-tucg & 9 & 65 & 25 & 0.77 & 73 & 2,751 \\
std-copy3-tucg & 13 & 109 & 39 & 1.86 & 76 & 2,806 \\
std-copy4-tucg & 18 & 217 & -- & TO & 85 & 3,416 \\
std-copy5-tucg & 19 & 233 & 76 & 5.47 & 90 & 3,642 \\
std-copy6-tucg & 22 & 301 & -- & TO & 97 & 3,991 \\
std-copy7-tucg & 25 & 357 & -- & TO & 101 & 4,321 \\
std-copy8-tucg & 27 & 430 & 105 & 8.05 & 106 & 4,581 \\
  \hline
  \end{tabular}\hspace{0.4in}
\begin{tabular} {||l || c  | c | c | c || c | c ||}
    \hline
    \bf Benchmark &  \multicolumn{4}{|c||}{\textsc{\quic}} & \multicolumn{2}{|c||}{Z3/\textsc{Spacer}}\\
    \cline{2-7}
    & \bf Depth & \bf Lemmas & \bf Inv & \bf Time [s] & \bf Depth & \bf Lemmas\\
    \hline
std-copy9-tucg & 31 & 538 & 145 & 14.74 & 111 & 5,078 \\
std-copyInitSum2-tucg & 32 & 511 & -- & TO & 77 & 2,987 \\
std-copyInitSum3-tucg & 14 & 127 & -- & TO & 76 & 3,103 \\
std-copyInitSum-tucg & 9 & 59 & 21 & 0.43 & 78 & 3,085 \\
std-copyInit-tucg & 10 & 69 & 27 & 0.59 & 75 & 2,851 \\
std-find-tucg & 8 & 35 & 7 & 0.32 & 105 & 2,915 \\
std-init2-tucg & 7 & 29 & 8 & 0.14 & 88 & 3,662 \\
std-init3-tucg & 7 & 30 & 8 & 0.14 & 95 & 4,122 \\
std-init4-tucg & 7 & 31 & 8 & 0.14 & 94 & 3,898 \\
std-init5-tucg & 7 & 32 & 8 & 0.14 & 93 & 4,152 \\
std-init6-tucg & 7 & 33 & 8 & 0.15 & 95 & 4,090 \\
std-init7-tucg & 7 & 34 & 8 & 0.14 & 100 & 4,916 \\
std-init8-tucg & 7 & 35 & 8 & 0.15 & 97 & 4,604 \\
std-init9-tucg & 7 & 32 & 11 & 0.21 & 100 & 4,929 \\
std-maxInArray-tucg & 7 & 30 & 9 & 0.33 & 132 & 4,618 \\
std-minInArray-tucg & 7 & 30 & 10 & 0.27 & 133 & 4,686 \\
std-palindrome-tucg & 5 & 14 & -- & TO & 64 & 1,717 \\
std-part-orig-tucg & 10 & 83 & 11 & 11.59 & 138 & 5,035 \\
std-part-tucg & 13 & 103 & 41 & 1.7 & 132 & 4,746 \\
std-sort-N-nd-assert-L& 12 & 100 & 15 & 5.02 & 5 & 17 \\
std-vararg-tucg-tt & 9 & 40 & 10 & 0.23 & 133 & 4,622 \\
std-vector-diff-tucg & 12 & 112 & 14 & 2.94 & 76 & 2,964 \\

    \hline
\end{tabular}
}
\end{table}

\section{Related Work}
\label{sec:related}

Universally quantified invariants are necessary for verification of
systems with unbounded state size (i.e., the size of an individual
system state is unbounded) such as array manipulating programs,
programs with dynamic memory allocation, and parameterized systems in
general. Thus, the problem of universal invariant inference has been a
subject of intense research in a variety of areas of automated
verification. In this section, we present the related work that is
technically closest to ours and is applicable to the area of software
verification.

Classical predicate
abstraction~\cite{DBLP:conf/cav/GrafS97,DBLP:conf/tacas/BallPR01} has
been adapted to quantified invariants by extending predicates with
\emph{skolem} (fresh)
variables~\cite{DBLP:conf/popl/FlanaganQ02,DBLP:conf/vmcai/LahiriB04}. This
is sufficient for discovering complex loop invariants of array
manipulating programs similar to the ones used in our
experiments. These techniques require a decision procedure for
satisfiability of universally quantified formulas, and, significantly
complicate predicate
discovery~(e.g.,~\cite{DBLP:conf/cav/LahiriB04}). \textsc{Quic3}
extends this work to the \textsc{IC3} framework in which the predicate
discovery is automated and quantifier instantiation and instance
discovery are carefully managed throughout the procedure.

Recent
work~\cite{DBLP:conf/sas/BjornerMR13,DBLP:conf/sas/MonniauxG16,fse16}
studies this problem via the perspective of discovering universally
quantified models for CHCs. These works show that fixing the number of
expected quantifiers in an invariant is sufficient to approximate
quantified invariants by discovering a quantifier free invariant of a
more complex system. The complexity comes in a form of transforming
linear CHC to non-linear CHC (\emph{linear} refers to the shape of
CHC, not the theory of constraints). Unlike predicate abstraction,
guessing the predicates apriori is not required. However, both the
quantifiers and their instantiations are guessed eagerly based on the
syntax of the input problem. In contrast, \textsc{Quic3} works
directly on linear CHC (i.e., a transition system), and discovers
quantifiers and instantiations on demand. Hence, \textsc{Quic3} is not
limited to a fixed number of quantifiers, and, unlike these
techniques, is guaranteed to find the shortest counterexample. 

Model-Checking Modulo Theories
(MCMT)~\cite{DBLP:conf/cade/GhilardiR10} extends model checking to
array manipulating programs and has been used for verifying
heap manipulating programs and parameterized systems
(e.g.,~\cite{DBLP:conf/fmcad/ConchonGKMZ13}). It uses a combination
of quantifier elimination (QELIM) for computing predecessors of $\Bad$,
satisfiability checking of universally quantified formulas for pruning
exploration (and convergence check), and custom generalization
heuristics. In comparison, \textsc{Quic3} uses MBP
instead of QELIM and uses generalizations
based on bounded exploration.

\textsc{Safari}~\cite{DBLP:conf/cav/AlbertiBGRS12} (and later
\textsc{Booster}~\cite{DBLP:conf/atva/AlbertiGS14}), that extend MCMT
with Lazy Abstraction With Interpolation
(LAWI)~\cite{DBLP:conf/cav/McMillan06}, is closest to
\textsc{Quic3}. As in LAWI, interpolation (in case of \textsc{Safari},
for the theory of arrays~\cite{DBLP:journals/corr/abs-1204-2386}) is
used to construct a quantifier-free proof $\pi$ of bounded safety. The
proof $\pi$ is generalized by universally quantifying out some terms,
and a decision procedure for universally quantified formulas is used
to determine convergence. The key differences between \textsc{Safari}
and \textsc{Quic3} are the same as between \textsc{Lawi} and
\textsc{IC3}. We refer the reader to~\cite{DBLP:conf/cav/VizelG14} for
an in-depth comparison. Specifically, \texttt{Quic3\_MakeSafe}
computes an interpolation sequence that can be used for
\textsc{Safari}. However, unlike \textsc{Safari}, \textsc{Quic3} does
not rely on an external array interpolation procedure. Moreover, in
\textsc{Quic3}, the generalizations are dynamic and the
quantifiers are introduced as early as possible, potentially
exponentially simplifying the bounded proof. Finally,
\textsc{Quic3} manages its quantifier instantiations to
avoid relying on an external (semi) decision procedure. The
acceleration techniques used in \textsc{Booster} are orthogonal to
\textsc{Quic3} and can be combined in a form of pre-processing.

To our knowledge, \textsc{UPDR}~\cite{DBLP:conf/cav/KarbyshevBIRS15}
is the only other extension of IC3 to quantified invariants. The key
difference is that UPDR focuses on programs specified 
using the
Effectively PRopositional (EPR) fragment of \emph{uninterpreted} first order logic (e.g., without arithmetic) for
which quantified satisfiability is decidable. As such, UPDR does not
deal with quantifier instantiation and its mechanism for discovering quantifiers is different. 
\textsc{UPDR} is also limited to abstract
counterexamples (i.e., counterexamples to existence of universal
inductive invariants, as opposed to counterexamples to safety).

Interestingly, \textsc{Quic3} is closely related to algorithms for
quantified satisfiability
(e.g.,~\cite{DBLP:conf/vmcai/BradleyMS06,DBLP:conf/cav/GeM09,DBLP:conf/lpar/BjornerJ15}). \textsc{Quic3}
uses a MBP to construct a complete instantiation,
if possible. However,
unlike~\cite{DBLP:conf/vmcai/BradleyMS06,DBLP:conf/cav/GeM09}, the
convergence (of \texttt{Quic3\_MakeSafe}) does not rely on any
syntactic feature of the quantified formula.

\section{Conclusion}
\label{sec:conclusion}

In this paper, we present \quic, an extension of IC3 to reasoning
about array manipulating programs by discovering quantified inductive
invariants. While our extension keeps the basic structure of the IC3
framework, it significantly affects how lemmas and proof obligations
are managed and generalized. In particular, guaranteeing progress in
the presence of quantifiers requires careful management of the
necessary instantiations. Furthermore, discovering quantified lemmas,
requires new lemma generalization techniques that are able to infer
universally quantified facts based on several examples. Unlike
previous works, our generalizations and instantiations are done
\emph{on demand} guided by the property and current proof
obligations. We have implemented \quic in the CHC engine of Z3 and
show that it is competitive for reasoning about C programs.

\subsubsection*{Acknowledgements.}
This publication is part of a project that has received funding from
the European Research Council (ERC) under the European Union's Horizon
2020 research and innovation programme (grant agreement No
[759102-SVIS]).
The research was partially supported by Len Blavatnik and the Blavatnik Family
foundation, the Blavatnik Interdisciplinary Cyber Research Center, Tel Aviv
University, and the United States-Israel
Binational Science Foundation (BSF) grants No. 2016260 and 2012259.
We acknowledge the support of the Natural Sciences and Engineering
Research Council of Canada (NSERC), RGPAS-2017-507912.

\bibliographystyle{abbrv}
\bibliography{refs-short}

\begin{thebibliography}{10}

\bibitem{DBLP:conf/cav/AlbertiBGRS12}
F.~Alberti, R.~Bruttomesso, S.~Ghilardi, S.~Ranise, and N.~Sharygina.
\newblock {SAFARI:} {SMT}-based abstraction for arrays with interpolants.
\newblock In {\em {CAV}}, 2012.

\bibitem{DBLP:conf/atva/AlbertiGS14}
F.~Alberti, S.~Ghilardi, and N.~Sharygina.
\newblock Booster: An acceleration-based verification framework for array
  programs.
\newblock In {\em {ATVA}}, 2014.

\bibitem{DBLP:conf/tacas/BallPR01}
T.~Ball, A.~Podelski, and S.~K. Rajamani.
\newblock Boolean and cartesian abstraction for model checking {C} programs.
\newblock In {\em {TACAS}}, 2001.

\bibitem{DBLP:conf/tacas/Beyer17}
D.~Beyer.
\newblock Software verification with validation of results - (report on
  {SV-COMP} 2017).
\newblock In {\em {TACAS}}, 2017.

\bibitem{DBLP:conf/vmcai/BjornerG15}
N.~Bj{\o}rner and A.~Gurfinkel.
\newblock Property directed polyhedral abstraction.
\newblock In {\em {VMCAI'15}}, 2015.

\bibitem{DBLP:conf/lpar/BjornerJ15}
N.~Bj{\o}rner and M.~Janota.
\newblock Playing with quantified satisfaction.
\newblock In {\em {LPAR}}, 2015.

\bibitem{DBLP:conf/sas/BjornerMR13}
N.~Bj{\o}rner, K.~L. McMillan, and A.~Rybalchenko.
\newblock On solving universally quantified {H}orn clauses.
\newblock In {\em Static Analysis {(SAS)}}, 2013.

\bibitem{DBLP:conf/vmcai/Bradley11}
A.~R. Bradley.
\newblock {SAT-Based Model Checking without Unrolling}.
\newblock In {\em {VMCAI}}, 2011.

\bibitem{DBLP:conf/vmcai/BradleyMS06}
A.~R. Bradley, Z.~Manna, and H.~B. Sipma.
\newblock What's decidable about arrays?
\newblock In {\em Verification, Model Checking, and Abstract Interpretation
  {(VMCAI)}}, 2006.

\bibitem{DBLP:journals/corr/abs-1204-2386}
R.~Bruttomesso, S.~Ghilardi, and S.~Ranise.
\newblock Quantifier-free interpolation of a theory of arrays.
\newblock {\em Logical Methods in Computer Science}, 8(2), 2012.

\bibitem{DBLP:conf/fmcad/ConchonGKMZ13}
S.~Conchon, A.~Goel, S.~Krstic, A.~Mebsout, and F.~Za{\"{\i}}di.
\newblock Invariants for finite instances and beyond.
\newblock In {\em {FMCAD}}, 2013.

\bibitem{DBLP:conf/tacas/MouraB08}
L.~M. de~Moura and N.~Bj{\o}rner.
\newblock {Z3:} an efficient {SMT} solver.
\newblock In {\em {TACAS}}, 2008.

\bibitem{DBLP:conf/esop/DilligDA10}
I.~Dillig, T.~Dillig, and A.~Aiken.
\newblock Fluid updates: Beyond strong vs. weak updates.
\newblock In {\em European Symposium on Programming {(ESOP)}}, 2010.

\bibitem{DBLP:conf/popl/FlanaganQ02}
C.~Flanagan and S.~Qadeer.
\newblock Predicate abstraction for software verification.
\newblock In {\em {POPL}}, 2002.

\bibitem{DBLP:conf/cav/GeM09}
Y.~Ge and L.~M. de~Moura.
\newblock Complete instantiation for quantified formulas in satisfiabiliby
  modulo theories.
\newblock In {\em Computer Aided Verification {(CAV)}}, 2009.

\bibitem{DBLP:conf/cade/GhilardiR10}
S.~Ghilardi and S.~Ranise.
\newblock {MCMT:} {A} model checker modulo theories.
\newblock In {\em {IJCAR'10}}, 2010.

\bibitem{DBLP:conf/cav/GrafS97}
S.~Graf and H.~Sa{\"{\i}}di.
\newblock Construction of abstract state graphs with {PVS}.
\newblock In {\em {CAV'97}}, 1997.

\bibitem{DBLP:conf/fmcad/GurfinkelI15}
A.~Gurfinkel and A.~Ivrii.
\newblock Pushing to the top.
\newblock In {\em {FMCAD}}, 2015.

\bibitem{DBLP:conf/cav/GurfinkelKKN15}
A.~Gurfinkel, T.~Kahsai, A.~Komuravelli, and J.~A. Navas.
\newblock The {S}ea{H}orn verification framework.
\newblock In {\em Computer Aided Verification {(CAV)}}, 2015.

\bibitem{fse16}
A.~Gurfinkel, S.~Shoham, and Y.~Meshman.
\newblock {SMT}-based verification of parameterized systems.
\newblock In {\em {FSE}}, 2016.

\bibitem{DBLP:conf/sat/HoderB12}
K.~Hoder and N.~Bj{\o}rner.
\newblock Generalized property directed reachability.
\newblock In {\em {SAT'12}}, 2012.

\bibitem{DBLP:conf/cav/HoderBM11}
K.~Hoder, N.~Bj{\o}rner, and L.~M. de~Moura.
\newblock \emph{{\(\mu\)}Z}- an efficient engine for fixed points with
  constraints.
\newblock In {\em Computer Aided Verification {(CAV)}}, 2011.

\bibitem{DBLP:conf/cav/KarbyshevBIRS15}
A.~Karbyshev, N.~Bj{\o}rner, S.~Itzhaky, N.~Rinetzky, and S.~Shoham.
\newblock Property-directed inference of universal invariants or proving their
  absence.
\newblock In {\em {CAV}}, 2015.

\bibitem{DBLP:conf/fmcad/KomuravelliBGM15}
A.~Komuravelli, N.~Bj{\o}rner, A.~Gurfinkel, and K.~L. McMillan.
\newblock Compositional verification of procedural programs using {H}orn
  clauses over integers and arrays.
\newblock In {\em {FMCAD}}, 2015.

\bibitem{DBLP:conf/cav/KomuravelliGC14}
A.~Komuravelli, A.~Gurfinkel, and S.~Chaki.
\newblock {SMT-Based Model Checking for Recursive Programs}.
\newblock In {\em Computer Aided Verification {(CAV)}}, 2014.

\bibitem{DBLP:conf/vmcai/LahiriB04}
S.~K. Lahiri and R.~E. Bryant.
\newblock Constructing quantified invariants via predicate abstraction.
\newblock In {\em {VMCAI}}, 2004.

\bibitem{DBLP:conf/cav/LahiriB04}
S.~K. Lahiri and R.~E. Bryant.
\newblock Indexed predicate discovery for unbounded system verification.
\newblock In {\em Computer Aided Verification {(CAV)}}, 2004.

\bibitem{DBLP:conf/cav/McMillan06}
K.~L. McMillan.
\newblock Lazy abstraction with interpolants.
\newblock In {\em {CAV}}, 2006.

\bibitem{DBLP:conf/sas/MonniauxG16}
D.~Monniaux and L.~Gonnord.
\newblock Cell morphing: From array programs to array-free {H}orn clauses.
\newblock In {\em Static Analysis {(SAS)}}, 2016.

\bibitem{DBLP:conf/cav/VizelG14}
Y.~Vizel and A.~Gurfinkel.
\newblock Interpolating property directed reachability.
\newblock In {\em {CAV'14}}, 2014.

\end{thebibliography}

\end{document}